\newtheorem{theorem}{Theorem}
\newtheorem{corollary}{Corollary}
\newtheorem{lemma}{Lemma}
\newtheorem{assumption}{Assumption}
\def\forall{\text{for all\ }}
\newcommand{\aij}{\alpha_{ij}}
\newcommand{\bij}{\beta_{ij}}
\newcommand{\Ni}{\mathcal{N}_i}
\newcommand{\Nil}{\mathcal{N}_i^{\mathcal{L}}}
\newcommand{\Nim}{\mathcal{N}_i^{\mathcal{M}}}
\newcommand{\xl}{x_{\mathcal{L}}}
\newcommand{\Txl}{\Tilde{x}_{\mathcal{L}}}
\newcommand{\xm}{x_{\mathcal{M}}}
\newcommand{\wl}{W_{\mathcal{L}}}
\newcommand{\barwl}{\widebar{W}_{\mathcal{L}}}
\newcommand{\wm}{W_{\mathcal{M}}}
\newcommand{\an}[1]{{\textcolor{magenta}{#1}}}
\newcounter{sideremark}
\title{\LARGE 
Multi-Agent Resilient Consensus under Intermittent Faulty and Malicious Transmissions}
\author{Sarper Ayd\i n, Orhan Eren Akg\"un, Stephanie Gil, and Angelia Nedi\'c %
\thanks{S. Ayd\i n and A. Nedi\'c are with the School of Electrical, Computer and Energy Engineering, Arizona State University, Tempe, AZ 85281. E-mail:{\tt\small \; saydin2@asu.edu; angelia.nedich@asu.edu.}  O. E. Akg\"un and S. Gil are with the School of Engineering and Applied Sciences, Harvard
University, Cambridge, MA 02139. E-mail: {\tt\small \;erenakgun@g.harvard.edu ; sgil@seas.harvard.edu}.
This work has been supported by the NSF awards CNS-2147641 and CNS-2147694.}}
\begin{document}
\normalsize
\maketitle

%
\begin{abstract}

In this work, we consider the consensus problem in which legitimate agents share their values over an undirected communication network in the presence of malicious or faulty agents. Different from the previous works, we characterize the conditions that generalize to several scenarios such as intermittent faulty or malicious transmissions, based on trust observations. As the standard trust aggregation approach based on a constant threshold fails to distinguish intermittent malicious/faulty activity, we propose a new detection algorithm utilizing time-varying thresholds and the random trust values available to legitimate agents. Under these conditions, legitimate agents almost surely determine their trusted neighborhood correctly with geometrically decaying misclassification probabilities. We further prove that the consensus process converges almost surely even in the presence of malicious agents. We also derive the probabilistic bounds on the deviation from the nominal consensus value that would have been achieved with no malicious agents in the system. Numerical results verify the convergence among agents and exemplify the deviation under different scenarios.
\end{abstract}

%

%
\section{Introduction}



In this paper we are interested in the consensus problem \cite{degroot1974reaching,olfati2004consensus} in cyberphysical multi-agent systems under intermittent malicious attacks or failures. Agents need to reach an agreement over a set of variables using only local computation and communicating over a static undirected graph in the presence of malicious (non-cooperative) agents. Consensus algorithms constitute a basis for distributed decision-making in networked multi-agent systems \cite{NedicOzdaglar,nedic2018distributed}, and are relevant for many multi-agent coordination applications, such as determining heading direction, rendezvous, and velocity agreement\cite{kia2019tutorial,bullo2006rendezvous, martinez2009distributed}. However, consensus algorithms that assume all agents are cooperative are known to be susceptible to malicious and faulty behaviors \cite{pasqualetti2011consensus,sundaram2010distributed}. Our goal in this work is to develop a detection method that utilizes "side information" from the physical aspects of cyber-physical systems and the associated consensus algorithm, aiming to mitigate intermittent attacks and failures. A major difference to previous work is that here we treat the time-varying case, where agents behave maliciously intermittently.

Achieving resilient consensus in the presence of malicious agents has been studied extensively in the literature. Earlier methods that only use the transmitted data to detect or eliminate untrustworthy information impose restrictions on the connectivity of the network and the number of tolerable malicious agents \cite{dolev1982byzantine,pasqualetti2011consensus,sundaram2010distributed,leblanc2013resilient}. As these fundamental limitations apply to other distributed computation algorithms \cite{sundaram2019optimization,rawat2011sensing}, researchers have explored leveraging additional information, that can be obtained from the physicality of the system, to assess the trustworthiness of the agents~\cite{gil2017guaranteeing,cavorsi2023ICRA,Pierson2016,xiong2023securearray}. For example, in a Sybil attack where a malicious agent can spoof nonexistent entities to gain a disproportionate influence in the consensus \cite{gil2019consensus}, or a location misreporting attack in multi-robot location coverage \cite{gil2017guaranteeing}, work in \cite{gil2017guaranteeing} shows that stochastic trust observations $\alpha_{ij}(t)\in[0,1]$—indicating the trustworthiness of a link $(i,j)$ at time $t$—can be derived from wireless signal information. In these cases $\alpha_{ij}(t)$ captures properties of uniqueness (for the Sybil Attack), and/or uses cross-validation of the direction of arrival of the signal (for the location misreporting attack), and satisfies the property that $\mathbb{E}(\alpha_{ij}(t))\geq 1-\epsilon$ when $j$ is a legitimate agent and $\mathbb{E}(\alpha_{ij}(t))\leq \epsilon$ when $j$ is a malicious agent for some constant $\epsilon<1/2$~\cite{gil2017guaranteeing}. While these are specific examples, the concept of capturing the likelihood that an agent $j$ is potentially malicious via a stochastic observation of trust is a general one (see survey paper~\cite{gil2023physicality}).

Previous work \cite{yemini2021characterizing} shows that agents can detect untrustworthy agents over time using the trust observations and reach consensus even when malicious agents are in the majority. In particular, as the bounds on the expectations of trust observations $\epsilon$ is less than a known threshold $1/2$, legitimate agents can provably distinguish the trustworthiness of their neighbors by aggregating more observations over time and comparing the aggregated trust values to a predefined threshold. However, this ability breaks under \emph{intermittent} attacks of the malicious agents. In certain cases, malicious agents can inflict more damage to distributed systems by attacking randomly instead of attacking all the time \cite{nurellari2018detection,kailkhura2015detection}. Notably, this assumption is violated even for \emph{unintentional} behavior such as intermittent failures due to noisy sensors leading to incorrect location reporting. The reason is that intermittent attacks result in in a mixture of trustworthy and untrustworthy transmissions, precluding the ability to differentiate an attacker from a legitimate agent by using a constant threshold (for example based on $\epsilon<1/2$) as was the case in previous work \cite{yemini2021characterizing,akgun2023learning}. Standard statistical tests necessitate the knowledge and certain forms of the distributions where samples are drawn, e.g. their moments and continuity \cite{kay1998fundamentals,berger2014kolmogorov}. However, such properties may not be available to agents or may not hold with intermittent malicious transmissions, leading to the unavailability of convergence guarantees for the tests.

We address these challenges by proposing a new detection algorithm and a consensus method providing resilience against intermittent attacks and failures. Our detection method utilizes key observations that legitimate agents trust observations are sampled from the same distribution and that their expectations are higher than the malicious agents even when they act intermittently malicious. In the proposed algorithm, agents accumulate trust values from neighbors over time. Each round, they select their most trusted neighbor (the one with the highest aggregate trust value) as a reference and construct a trusted neighborhood by comparing other agents' aggregate trust values to that of the most trusted neighbor. Agents employ an adaptive threshold that grows over time, allowing them to exclude all malicious agents eventually, while still keeping their legitimate neighbors in their trusted neighborhood. Agents perform consensus updates using the values coming from their trusted neighbors only. Under the assumption that all legitimate agents have at least one legitimate neighbor, we demonstrate that the probability of agents misclassifying their neighbors decreases geometrically over time, resulting in a period after which no classification errors occur. Moreover, we show that the legitimate agents reach consensus almost surely, and their deviation from the consensus value is bounded. More specifically, our contributions are as follows: 
\begin{enumerate}
    \item We introduce a detection algorithm (\cref{alg_trust_neig}) that distinguishes between legitimate and malicious agents without relying on a predefined threshold.
    \item We show that misclassification probabilities using the detection method decreases geometrically over time (Lemmas \ref{lem_misp_legit}-\ref{lem_misp_mal}). Moreover, we show that there exists a random but finite time where thereafter no classification errors occur (\cref{lem_als_w}).
    \item We show that legitimate agents can reach consensus almost surely, regardless of the frequency of the attack (\cref{cor_con}). Moreover, for a given confidence level, we explicitly characterize the maximal deviation from the consensus value based on the properties of the trust values capturing the effect of the intermittent attacks, parameters determining the growth rate of the adaptive threshold, and number of legitimate and malicious agents (Theorem \ref{thm_dev}).
    \item We validate our approach in numerical studies under random attack and failure cases, showing that agents achieve convergence over time and learn the trustworthiness of their neighbors as predicted by the analysis. 
\end{enumerate}

\section{Consensus Dynamics with Failures and Attacks}\label{sec:model}
\subsection{Notation}
We use $|.|$ to denote absolute values of scalars and cardinalities of sets. We write $[.]_{i}$ and $[.]_{ij}$ for the $i^{th}$ entry of a vector and the $ij$-th entry of a matrix, respectively. We also extend the notation $|.|$ to matrices/vectors to define the element-wise absolute value of matrices/vectors, e.g., $[|A|]_{ij}=|[A]_{ij}|$. For matrices $A$ and $B$, we write $A>B$ (or $A\ge B$) when $[A]_{ij}>[B]_{ij}$ 
(or $[A]_{ij} \ge [B]_{ij}$) for all $i,j$. 
We use $\mathbf{0}$ and $\mathbf{1}$ to represent vectors/matrices
whose entries are all 0 and 1, respectively, without explicitly stating their dimensions as they can be understood within the context. 
We also use the backward matrix product of the matrices $H_k$, defined as follows:
\begin{align}
    \prod_{k=\tau}^t H_k, \begin{cases}
 H_t \cdots H_{\tau-1} H_{\tau} \: &\text{if } t \ge \tau, \\
I \: &\text{otherwise},
\end{cases}
\end{align}
where $I$ corresponds to the identity matrix.

\subsection{Consensus in Presence of Untrustworthy Agents}
We study the consensus dynamics among multiple agents defined by the set $\mathcal{N}:= \{1,\ldots, N\}$. The agents send and receive information through a static undirected graph  $G(\mathcal{N}, \mathcal{E})$, where $\mathcal{E}\subseteq \mathcal{N} \times \mathcal{N}$ represents the set of undirected edges among the agents. 
For each agents $i$, the set  of neighboring agents
is denoted by $\Ni:=\{ j \in \ccalN: (i,j) \in \ccalE\}$.
The agent set $\mathcal{N}$ consists of legitimate agents who are always trustworthy and malicious agents who can be trustworthy or not. The set of legitimate agents is denoted by $\mathcal{L}$, while the set of malicious agents is denote by $\mathcal{M}$, with $\mathcal{L}\cup\mathcal{M}=\mathcal{N}$ and $\mathcal{L}\cap \mathcal{M}=\emptyset$. These sets are fixed over time and assumed to be unknown.
The legitimate agents have associated nonnegative weights, subject to changes over time, for the existing communication links such that $w_{ij}(t) \in  [0,1]$  if $(i,j) \in \mathcal{E}$, otherwise $w_{ij}(t)=0$. 
The consensus dynamics among the agents starts at some time $T_0\ge0$, and we model the dynamic for the legitimate agents, as follows: for all $i \in \ccalL$ and $\forall t \ge T_0-1$, 
\begin{equation} \label{eq_con}
     x_i(t+1)=w_{ii}(t)x_i(t)+\sum_{j \in \mathcal{N}_i} w_{ij}(t)x_j(t),
 \end{equation}
 where $x_i(t) \in \reals$ for all $i \in \ccalL$.
 According to this update rule,
 each legitimate agent $i \in \ccalL$ takes a convex combination of its value and its neighbors, i.e. $w_{ii}(t) >0$, $w_{ij} (t)\ge 0$, and $w_{ii}(t)+ \sum_{j \in \mathcal{N}_i} w_{ij}(t)=1$.  Since the consensus update starts at time $T_0$, we assume that $x_i(0)=x_i(t)$ for all $0\le t < T_0$.
 The dynamic of the malicious agent's values is assumed to be unknown even in the case they are not actively attacking, and it is not modeled. 

We define $x(t) \in \reals^N$ as a vector of agents' values at time $t$. Given the partition of the agents as legitimate and malicious, 
 we partition the vector $x(t)$ accordingly, i.e., 
 $x(t)=[ \xl(t), \xm(t)]^T$. Then, the consensus dynamics~\eqref{eq_con} can be written in a vector notation:
 \begin{equation} \label{eq_con_dy}
\xl(t+1) 
=
     \begin{bmatrix}
\wl(t) & \wm(t)
\end{bmatrix}
.
  \begin{bmatrix}
\xl(t) \\
\xm(t) 
\end{bmatrix},
 \end{equation}
where $\wl(t) \in \mathbb{R}^{|\ccalL|\times |\ccalL|}$ and $\wm(t) \in \mathbb{R}^{|\ccalL|\times |\ccalM|}$ are the weight matrices associated with legitimate and malicious agents. 
Hence, the consensus dynamics of legitimate agents can be written as a sum of two terms at any time $t \ge T_0$, 
 \begin{equation}\label{eq_con_sum}
     \xl(T_0,t)= \Tilde{x}_{\ccalL}(T_0,t)+\phi_{\ccalM}(T_0,t),
 \end{equation}
where
\begin{align}  
    \Tilde{x}_L(T_0,t)&= \bigg ( \prod_{k=T_0-1}^{t-1} \wl(k) \bigg)\xl(0), \label{eq_con_part1}\\
    \phi_{\ccalM}(T_0,t)&=  \sum_{k=T_0-1}^{t-1} \bigg ( \prod_{\ell=k+1}^{t-1} \wl(\ell) \bigg) \wm(k) \xm(k). \label{eq_sep_Dy}
\end{align}
Here, the term $\Tilde{x}_L(T_0,t)$ represents the influence of legitimate agents on each other and the term $\phi_{\ccalM}(T_0,t)$ represents the influence of malicious agents on the legitimate agents' values. These relations in~\eqref{eq_con_sum}-\eqref{eq_sep_Dy} are the backbone of the subsequent analysis, as they capture the consensus dynamics of the legitimate agents in terms of the starting time $T_0$, the initial values $x(0)$, together with the malicious inputs $\xm(k)$. 

We assume that the values $x_i(t)$ of all agents are bounded with a parameter $\eta>0$, i.e., $|x_i(t)| \le \eta$ for all $i\in\mathcal{N}$, and this parameter is known by all agents. Under this assumption,  no malicious agent will ever send a value that exceeds $\eta$ for otherwise it will be immediately detected. This assumption is crucial for bounding the cumulative impact of malicious inputs, as
captured by $\phi_{\ccalM}(T_0,t)$ in~\eqref{eq_sep_Dy}.

\subsection{Trusted Neighborhood Learning}

Each legitimate agent $i \in \ccalL$  aims to classify its legitimate neighbors $\Nil := \Ni \cap \mathcal{L}$ and malicious neighbors $\Nim := \Ni \cap \mathcal{M}$ correctly over time by gathering trust values $\alpha_{ij}(t)$ for each transmission from their neighbors  
$j \in \Ni$ (see~\cite{gil2017guaranteeing} for more details on how to compute the trust values $\alpha_{ij}(t)$). 
The values $\alpha_{ij}(t)$, $t\ge0$, are random with values in the unit interval, i.e.,  $\alpha_{ij}(t) \in [0,1]$ for all $j\in\mathcal{N}$ and all $t\ge0$, 
where higher $\alpha_{ij}(t)$ values ($\alpha_{ij}(t) \rightarrow 1$) indicate the event that a neighbor $j$ is legitimate, is more likely.

The legitimate agents utilize the observed trust values $\{\alpha_{ij}(k)\}_{0\le k \le t}$ to determine 
their trustworthy neighbors and select the weights $w_{ij}(t)$ at time $t$. Following the work in~\cite{yemini2021characterizing}, we use the aggregate trust values, i.e.,
\begin{equation}
    \bij(t)=\sum_{k=0}^{t} (\aij(k)-1/2), \;  \forall \: i \in \ccalL\hbox{ and } j \in \Ni. 
\end{equation}
We make the following assumption on the trust values $\alpha_{ij}(t)$.

\begin{assumption} \label{as_trust}
Suppose that the following statements hold.
\begin{enumerate}
    \item[\textit{(i)}] The expected value of malicious and legitimate transmissions received by a legitimate agent $i \in \mathcal{L}$ are constant and satisfy
    \begin{align}
        c_j&=\mathbb{E}(\aij(t)), \: j \in \Nim, \\
        d&=\mathbb{E}(\aij(t)), \: j \in \Nil, 
    \end{align}
    where $d-c_j >0$ for all $j \in \Nim$.
    \item[\textit{(ii)}] The random variables $\alpha_{ij}(t)$ observed by a legitimate agent $i \in \mathcal{L}$ are independent and identically distributed for a given agent $j \in \mathcal{N}_i$ at any time index $t \in \mathbb{N}$.

    \item[\textit{(iii)}]  The subgraph $G_{\ccalL}=(\ccalL, \ccalE_{\ccalL})$ induced by the set of legitimate agents $\mathcal{L}$ is connected, where $\ccalE_{\ccalL}:= \{(i,j) \in \ccalE: (i,j) \in \ccalL \times \ccalL \}$.
\end{enumerate}
\end{assumption}

Assumption~\ref{as_trust}-\textit{(i)} captures the scenarios
where each malicious agent $m \in \mathcal{M}$ transmits malicious information with a (nonzero) probability $p_m \in (0,1]$ at each time $t$, whereas a legitimate agent $l \in \mathcal{L}$ never exhibits malicious behavior. It also holds when a malicious agent periodically sends malicious information in (deterministic) bounded time intervals. As a result, different and unknown rates of malicious transmissions correspond to mixed distributions with different expectations for the malicious agents' trust values. 
Assumption \ref{as_trust}-\textit{(ii)} requires independent trust samples of each neighbor from identical distributions; note that distributions of trust values given an agent $j \in \Ni$ are identical, while these distributions can be a mixture of several distributions. Assumption \ref{as_trust}-\textit{(iii)} imposes the connectivity among legitimate agents with a fixed topology.

Unlike the work in~\cite{yemini2021characterizing},
{\it we assume that the legitimate agents do not have any apriori threshold values to determine their trusted neighborhood}. This phenomenon can arise  due to the dynamic behaviour of the malicious agents. For example,  a malicious agent $m$ can have faulty and/or malicious transmissions with some probability $p_m\in (0,1]$, while it can send legitimate values with probability $1-p_m$.

To handle the situation when an apriori threshold is unavailable, we propose a new learning method that legitimate agents implement to identify their trustworthy neighbors over time.
The algorithm is built on three properties,
(1)~all legitimate agents have at least one legitimate neighbor, i.e., $\Nil \neq \emptyset$, (Assumption~\ref{as_trust}-\textit{(iii)}) 
(2)~the legitimate agents have identical aggregate trust values in expectation, and (3)~the legitimate agents have higher trust values compared to malicious agents in expectation (see Assumption~\ref{as_trust}-\textit{(i)}). 
Based on property (3), in the algorithm, each legitimate agent chooses the highest aggregate trust value as a reference point. Then, it eliminates the malicious agents based on the unbounded (expected) difference of trust value aggregates between a legitimate and a malicious agent, as $t \xrightarrow{} \infty$ (based on Assumption \ref{as_trust}-\textit{(i)}). Algorithm \ref{alg_trust} is provided as below.

\begin{algorithm}[H] 
   \caption{Trusted Neighborhood Learning}
\label{alg_trust}
\begin{algorithmic}[1]\label{alg_trust_neig}

  \STATE {\bfseries Input:} Threshold value $\xi >0$, $\gamma \in (0.5,1)$.
\STATE Each agent $i \in \ccalL$ finds $\bar{j}(t)=\arg \max_{j \in \mathcal{N}_i} \beta_{ij}(t)$.
\STATE  Each agent $i \in \ccalL$ checks if $ \beta_{i\bar{j}(t)}(t)- \beta_{ij}(t) \le \xi_t$, where $\xi_t=\xi (t+1)^{\gamma}$, $ \forall j \in \Ni$.
\STATE Each agent $i \in \ccalL$ returns $\Ni(t)=\{j \in \Ni| \beta_{i\bar{j}(t)}(t)- \beta_{ij}(t) \le \xi_t \}$.
   \end{algorithmic}
\end{algorithm}
\vspace{-3 mm}
In words, each legitimate agent $i\in\mathcal{L}$ first selects its most trusted neighbor (Step $2$). Then, it compares others' trusted values with the most trusted agent (Step $3$), and finally determines its trusted neighborhood with time-varying threshold values (Step $4$). The chosen range for $\gamma$ ensures the threshold grows slow enough to exclude malicious agents while maintaining a pace that retains legitimate agents over time. The rationale behind this selection will become more evident in \cref{lem_misp_legit} and \cref{lem_misp_mal} later on.

Next, we define the actual weights $w_{ij}(t)= [W(t)]_{ij}$ assigned by legitimate agents $i \in \ccalL$ based on their 
learned trusted neighborhoods $\Ni(t)$, as follows:
\begin{equation} \label{eq_wij}
    w_{ij}(t)= \begin{cases}
 \frac{1}{n_{w_i}(t)} \: &\text{if } j \in \Ni(t), \\
1-\sum_{\ell \in \Ni (t)} w_{i\ell}(t) \: &\text{if } j=i,\\
0 \: &\text{otherwise},
\end{cases}
\end{equation}
where $n_{w_i}(t)= \max\{|\Ni (t)|+1, \kappa \} \ge 1$. 
and $\kappa >0$ is a parameter that limits the influence of other agents on the values $x_i(t)$.
Similarly, we define the matrix $\widebar{W}_{\ccalL}$ 
that would have been constructed if the legitimate agents have known their trusted neighbors, i.e.,
for the pairs of agents $(i,j) \in \ccalL \times \mathcal{N}$, 
\begin{equation} \label{eq_barwij}
    [\widebar{W}_{\ccalL}]_{ij}= \begin{cases}
\frac{1}{\max\{ |\Nil|+1, \kappa\}} \: &\text{if } j \in \Nil,\\
1-\frac{1}{\max\{ |\Nil|+1, \kappa\}} \: &\text{if } j=i,\\
0 \: &\text{otherwise}.
\end{cases}
\end{equation}

The (nominal) matrix $\widebar{W}_{\ccalL}$ can be considered as an ideal and target case for each legitimate agent to eliminate the effect of malicious agents in the consensus process defined in Eqs. \eqref{eq_con_sum}-\eqref{eq_sep_Dy}.

%

\section{Analysis of Consensus Dynamics} \label{sec::conv}

\subsection{Convergence of Consensus Dynamics}

We start by analyzing the probability that a legitimate agent $i$ misclassifies one of its neighbors at some time $t$ in \cref{alg_trust_neig}. This misclassification can occur in two ways. A legitimate agent $i$ can misclassify one of its legitimate neighbors $j \in \Nil$ as malicious, resulting in agent $j$ being excluded from the trusted neighborhood $\Ni(t)$. Conversely, agent $i$ can misclassify one of its malicious neighbors $m \in \Nim$ as legitimate, resulting in agent $m$ being included in the trusted neighborhood $\Ni(t)$. \cref{fig:alg_intuition} shows examples of how these misclassifications can occur.

\begin{figure}
     \centering
     \begin{subfigure}[b]{0.4\textwidth}
         \centering
         \includegraphics[width=\textwidth]{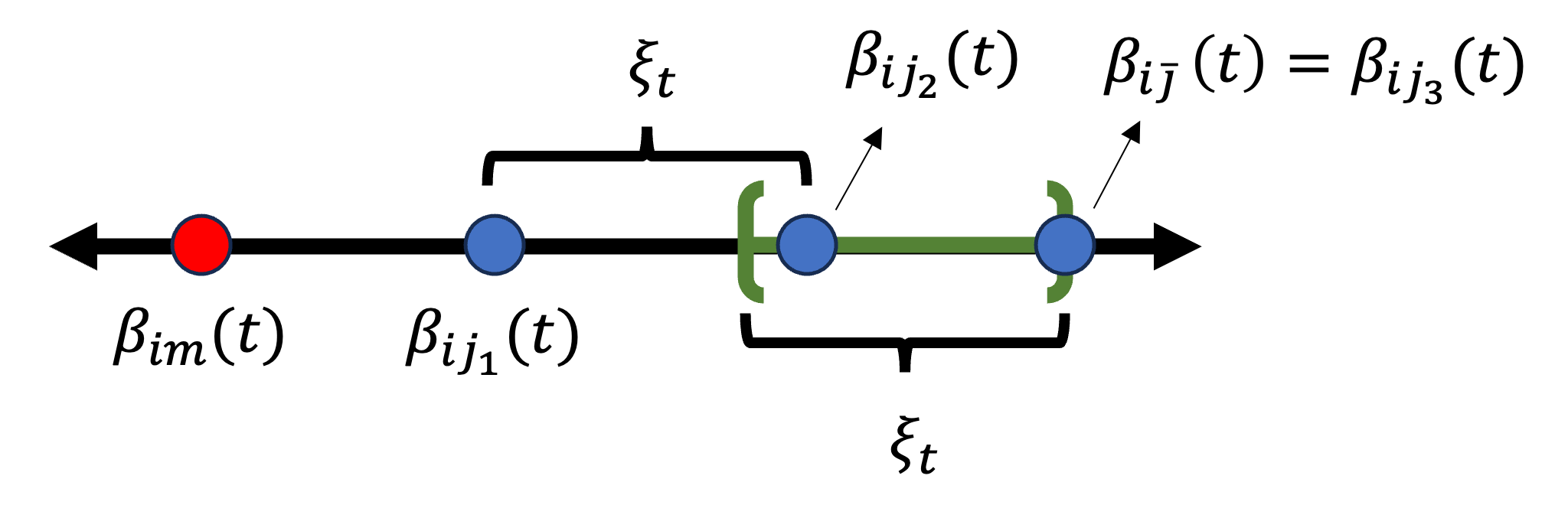}
         \caption{A legitimate neighbor is misclassified as malicious}
         \label{fig:alg_intuition_a}
     \end{subfigure}
     \hfill
     \begin{subfigure}[b]{0.4\textwidth}
         \centering
         \includegraphics[width=\textwidth]{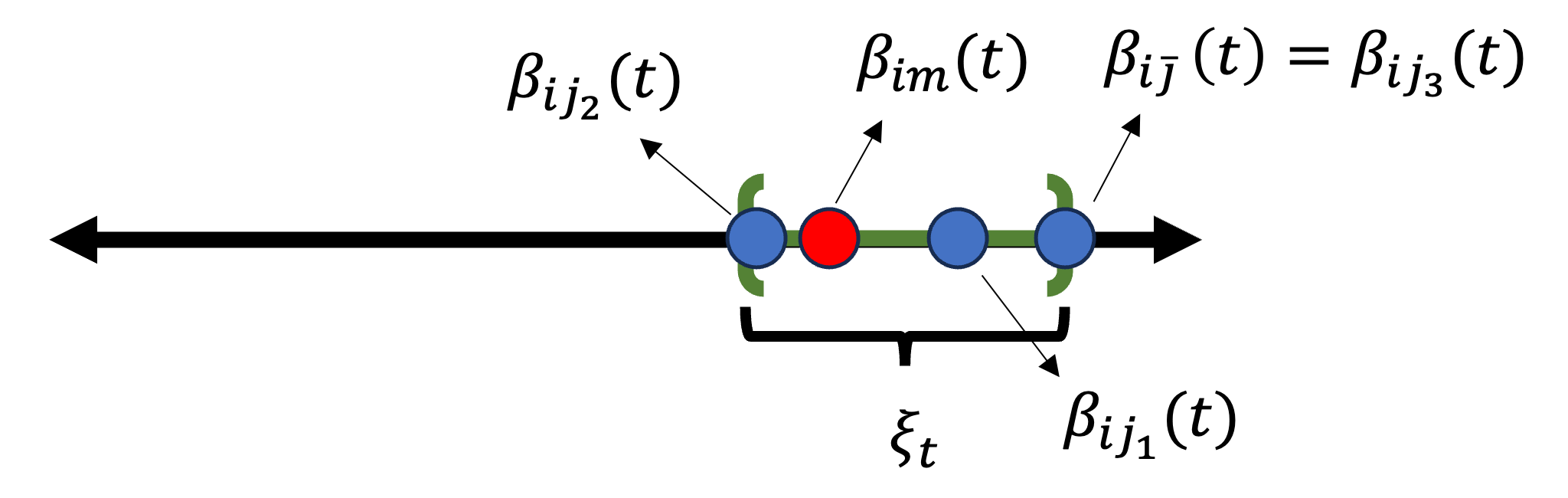}
         \caption{A malicious neighbor is misclassified as legitimate}
         \label{fig:alg_intuition_b}
     \end{subfigure}
    \caption{Misclassification examples for an agent $i$ with three legitimate neighbors with aggregate trust values $\beta_{ij_1}(t)$, $\beta_{ij_2}(t)$, $\beta_{ij_3}(t)$ and a malicious neighbor with aggregate trust value $\beta_{im}(t)$. The aggregate trust values are placed on a number line where values to the right are larger and values to the left are smaller. The green area inside the brackets indicates the trusted region $\xi_t$ from Algorithm~\ref{alg_trust_neig}.} 
    
    \label{fig:alg_intuition}
    \vspace{-3 mm}
\end{figure}
An agent $i$ misclassifies a legitimate neighbor $j$ if the difference between the maximum aggregate trust value $\beta_{i\bar{j}(t)}(t)$ and $j$'s aggregate trust value $\beta_{ij}(t)$ exceeds $\xi_t$, i.e., $\beta_{i\bar{j}(t)}(t) - \beta_{ij}(t) > \xi_t$ (refer to Step 4 in \cref{alg_trust_neig}). This also means that there exist an agent $j'\in \Ni$ such that $\beta_{ij'}(t)-\beta_{ij}(t) > \xi_t$. This is true since if such an agent exists, then difference to $\beta_{i\bar{j}(t)}(t)$ can be only larger, knowing that $\beta_{i\bar{j}(t)}(t)$ is the maximum value. This observation is key to  the proof of \cref{lem_misp_legit}. \cref{fig:alg_intuition} (a) provides a graphical intuition on this. If agent $j'$ is legitimate, the probability of the event $\beta_{ij'}(t)-\beta_{ij}(t) > \xi_t$ decreases as agent $i$ aggregates more trust values, since trust values of $j$ and $j'$ come from the same distribution. If agent $j'$ is malicious, again, the probability of the event $\beta_{ij'}(t)-\beta_{ij}(t) > \xi_t$ decreases over time, since the malicious agents trust values are lower in expectation. The following lemma builds on this intuition to show that the probability of misclassifying a legitimate neighbor decreases exponentially over time.
\vspace{-1 mm}
\begin{lemma} \label{lem_misp_legit}
 Suppose Assumption \ref{as_trust} holds. Let $\xi>0$ and $\gamma \in (0.5,1)$ be the parameters defined in Algorithm \ref{alg_trust}. Let $j$ be an arbitrary legitimate neighbor of a legitimate agent $i$, i.e., $j \in \Nil$ for agent $i \in \mathcal{L}$. Then, the misclassification probability of for agent $j$ by agent $i$ has the following upper bound:
\begin{align*}
    \mathbb{P}(j \not \in &\Ni(t)) \leq |\Nil|\exp(-\xi^2 (t+1)^{2\gamma}/2(t+1))\\
     &+|\Nim|\exp(-(\xi (t+1)^{\gamma}+\lambda(t+1))^2/2(t+1)),
\end{align*}
where $\lambda = \min_{m \in \mathcal{M}} (d-c_m)$.
\end{lemma}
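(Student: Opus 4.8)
The plan is to reduce the misclassification event to a union of one-sided deviation events and then control each by Hoeffding's inequality. First I would use the observation already flagged before the statement: since $\bar{j}(t)$ maximizes $\beta_{i\cdot}(t)$ over $\Ni$, the exclusion event $\{j\notin\Ni(t)\}=\{\beta_{i\bar{j}(t)}(t)-\beta_{ij}(t)>\xi_t\}$ forces at least one neighbor $j'\in\Ni$ (namely $j'=\bar{j}(t)$) to satisfy $\beta_{ij'}(t)-\beta_{ij}(t)>\xi_t$. Hence
\[
\{j\notin\Ni(t)\}\subseteq\bigcup_{j'\in\Ni}\{\beta_{ij'}(t)-\beta_{ij}(t)>\xi_t\},
\]
and a union bound, splitting $\Ni=\Nil\cup\Nim$, gives
\[
\mathbb{P}(j\notin\Ni(t))\le\sum_{j'\in\Nil}\mathbb{P}(\beta_{ij'}(t)-\beta_{ij}(t)>\xi_t)+\sum_{m\in\Nim}\mathbb{P}(\beta_{im}(t)-\beta_{ij}(t)>\xi_t).
\]
This already matches the two-term structure of the claimed bound, with the prefactors $|\Nil|$ and $|\Nim|$ coming from the number of summands; over-counting the $j'=j$ term (whose probability is $0$ since $\xi_t>0$) is harmless.

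Next I would rewrite each difference of aggregates as a single time sum, $\beta_{ij'}(t)-\beta_{ij}(t)=\sum_{k=0}^{t}(\alpha_{ij'}(k)-\alpha_{ij}(k))$, and treat $Z_k:=\alpha_{ij'}(k)-\alpha_{ij}(k)\in[-1,1]$ as bounded increments, independent across $k$ by Assumption~\ref{as_trust}-\textit{(ii)}, so that Hoeffding applies with per-term range $2$ and denominator $\sum_k(b_k-a_k)^2=4(t+1)$. For a legitimate $j'\in\Nil$ both observations have mean $d$, so $\mathbb{E}[\beta_{ij'}(t)-\beta_{ij}(t)]=0$, and Hoeffding with deviation $s=\xi_t=\xi(t+1)^{\gamma}$ yields $\exp(-2\xi_t^2/(4(t+1)))=\exp(-\xi^2(t+1)^{2\gamma}/(2(t+1)))$, which is exactly the first term.

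For a malicious $m\in\Nim$ the mean works in our favor: $\mathbb{E}[\beta_{im}(t)-\beta_{ij}(t)]=(t+1)(c_m-d)\le-(t+1)\lambda$ with $\lambda=\min_{m}(d-c_m)$. Writing the tail as a deviation from this mean, the required threshold becomes $\xi_t-(t+1)(c_m-d)=\xi_t+(t+1)(d-c_m)\ge\xi(t+1)^{\gamma}+\lambda(t+1)$, and since the Hoeffding tail is decreasing in the deviation $s$, bounding $s$ below by $\xi(t+1)^{\gamma}+\lambda(t+1)$ (uniformly in $m$) gives $\exp(-(\xi(t+1)^{\gamma}+\lambda(t+1))^2/(2(t+1)))$, the second term. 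Summing over the two neighbor classes completes the argument.

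The main obstacle I anticipate is the precise independence needed for Hoeffding: the argument uses independence of the increments $Z_k$ \emph{across} $k$, which I would obtain from Assumption~\ref{as_trust}-\textit{(ii)} (time-independence of the trust observations), noting that only independence of the pair $(\alpha_{ij}(k),\alpha_{ij'}(k))$ across different $k$ is required, not independence between distinct neighbors at a fixed $k$. A secondary point requiring care is the direction of the inequality in the malicious case: the negative mean strictly aids concentration, so one must correctly absorb the shift $(t+1)(d-c_m)$ into the deviation and lower-bound it by $\lambda(t+1)$ before invoking monotonicity of the Hoeffding bound.
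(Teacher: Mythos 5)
Your proposal is correct and follows essentially the same route as the paper's proof: reduce the exclusion event to a union of pairwise deviation events via the maximality of $\bar{j}(t)$, apply the union bound split over $\Nil$ and $\Nim$, and control each term with Chernoff--Hoeffding (zero-mean differences for legitimate pairs, a favorable negative mean shift lower-bounded by $\lambda(t+1)$ for malicious pairs). Your version is in fact slightly more explicit than the paper's on the Hoeffding constants (increments in $[-1,1]$ giving denominator $4(t+1)$) and on exactly which independence across time is needed.
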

\begin{proof}
    By \cref{alg_trust_neig}, a legitimate neighbor $j$ is mislassified when the event $\{\beta_{i\bar{j}(t)}(t)-\beta_{ij} (t) > \xi_t\}$ occurs. First, notice that this event is equivalent to the union of the events $\bigcup_{n \in \Ni} 
        \{ \beta_{in}(t)-\beta_{ij}(t) > \xi_t \}$. We check this equivalence in both directions. The event $\beta_{i\bar{j}(t)}(t) > \beta_{ij} (t) + \xi_t$ implies that there exist an element $j'\in \Ni$ such that $\beta_{ij'}(t) > \beta_{ij} (t) + \xi_t$, as we can simply choose $j'=\bar{j}(t)$. The converse is also true. The existence of a $j'\in \Ni$ with $\beta_{ij'}(t) > \beta_{ij} (t) + \xi_t$ implies that $\beta_{i\bar{j}(t)}(t) > \beta_{ij} (t) + \xi_t$ since $\beta_{i\bar{j}(t)}(t) \geq \beta_{ij'}(t)$ by definition given in \cref{alg_trust_neig} Step 2. Therefore, we have that the event $\{\beta_{i\bar{j}(t)}(t)-\beta_{ij} (t) > \xi_t\}$ is equivalent to the union of events $  \bigcup_{n \in \Ni} 
        \{ \beta_{in}(t)-\beta_{ij}(t) > \xi_t \}.$ Using this equality, we have,
    \begin{align}
        \mathbb{P}(j \not \in \Ni(t)) 
        = & \mathbb{P}(\beta_{i\bar{j}(t)}(t)-\beta_{ij} (t) > \xi_t) \\
        = & 
        \mathbb{P}(\bigcup_{n \in \Ni} 
        \{ \beta_{in}(t)-\beta_{ij}(t) > \xi_t \}) \\
        \leq & 
        \sum_{l \in \Nil \setminus\{j\}} \mathbb{P}( \beta_{il}(t)-\beta_{ij}(t) > \xi_t) \\
        &+ \sum_{m \in \Nim} \mathbb{P}( \beta_{im}(t)-\beta_{ij}(t) > \xi_t),
        \label{eq_j_legitimate_union_bound}
    \end{align}
    \vspace{-0.7 mm}
where the last step follows from the union bound. First, we focus on bounding the probability 
$\mathbb{P}( \beta_{il}(t)-\beta_{ij}(t) \ge \xi_t)$. Notice that $\beta_{il}(t)-\beta_{ij}(t) = \sum_{s=0}^t (\alpha_{il}(s)-\alpha_{ij}(s))$ is the sum of independent random variables $(\alpha_{il}(s)-\alpha_{ij}(s))$ with expectation $\mathbb{E}(\alpha_{il}(s)-\alpha_{ij}(s))=0.$ Therefore, we directly apply the Chernoff-Hoeffding inequality to obtain
    \begin{align*}
        \mathbb{P}( \beta_{il}(t)-\beta_{ij}(t) > \xi_t) &\leq \exp(-\xi_t^2/2(t+1)) \\
        &= \exp(-\xi^2 (t+1)^{2\gamma}/2(t+1)),
    \end{align*}
where in the last step we used the definition of $\xi_t$. Using a similar line of reasoning, 
we bound the probability $\mathbb{P}( \beta_{im}(t)-\beta_{ij}(t) > \xi_t)$, as follows:
\begin{align*}
        &\mathbb{P}( \beta_{im}(t)-\beta_{ij}(t) > \xi_t) \\
        &\overset{(a)}{=} \mathbb{P}( \beta_{im}(t)-\beta_{ij}(t) - \mathbb{E}(\beta_{im}(t)-\beta_{ij}(t)) \nonumber\\
        &\hspace{15 mm}> \xi (t+1)^{\gamma} + (t+1)(d-c_m)) \\
        &\overset{(b)}{\leq} \exp(-(\xi (t+1)^{\gamma} + (t+1)(d-c_m))^2/2(t+1)),
    \end{align*}
where in $(a)$ we embed the expected difference of trust values into both sides, and in $(b)$ we apply the Chernoff-Hoeffding inequality since $(d-c_m)>0$ by \cref{as_trust}-\textit{(i)}. The rest of the proof follows from combining the bounds with \cref{eq_j_legitimate_union_bound}.
\end{proof}
\vspace{-0.5 mm}
Next, we analyze the probability of misclassifying a malicious neighbor $m \in \Ni$. Such misclassification happens if the gap between the maximum aggregate trust value $\beta_{i\bar{j}(t)}(t)$ and $m$'s value $\beta_{im}(t)$ is at most $\xi_t$, i.e., $\beta_{i\bar{j}(t)}(t) - \beta_{im}(t) \leq \xi_t$. The maximum value not being large enough to exclude agent $m$ from the neighborhood implies that no neighboring agents' aggregate trust values surpass $\beta_{im}(t)$ by $\xi_t$ (see \cref{fig:alg_intuition} (b) for a graphical intuition). Given that at least one legitimate neighbor exists for any agent $i$ as per \cref{as_trust}-\textit{(iii)}, the probability of misclassification is upper bounded by the probability that a legitimate neighbor's trust value does not exceed $\beta_{im}(t)$ by more than $\xi_t$. This rationale underpins the proof of the following lemma showing that the misclassfication probability decreases exponentially over time.
\vspace{-5 mm}
\begin{lemma} \label{lem_misp_mal}
  Suppose Assumption \ref{as_trust} holds. Let $\xi>0$ and $\gamma \in (0.5,1)$ be the parameters defined in Algorithm \ref{alg_trust}. Let $m$ be an arbitrary malicious neighbor of a legitimate agent $i$, i.e., $m \in \Nim$ for agent $i \in \mathcal{L}$. Then, for all $t > \left(\frac{\xi}{d-c_m}\right)^{1/(1-\gamma)},$ the misclassification probability of agent $m$ by agent $i$ has the following upper bound:
 \begin{align*}
     \mathbb{P}(m\in & \Ni(t)) 
     \leq \exp(-(\xi (t+1)^{\gamma} + (t+1)\lambda)^2/2(t+1)),
 \end{align*}
 where $\lambda = \min_{m \in \mathcal{M}} (d-c_m)$.
\end{lemma}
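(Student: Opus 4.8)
The plan is to mirror the concentration template already used for the malicious term in the proof of \cref{lem_misp_legit}, but now aimed at the \emph{lower} tail of a single trust-difference. By Step~4 of \cref{alg_trust_neig}, a malicious neighbor $m$ is wrongly retained exactly when $\{m \in \Ni(t)\} = \{\beta_{i\bar{j}(t)}(t) - \beta_{im}(t) \le \xi_t\}$. The first move is to remove the maximum $\beta_{i\bar{j}(t)}(t)$ from this event. By \cref{as_trust}-\textit{(iii)} agent $i$ has at least one legitimate neighbor $l \in \Nil$, and by the definition of $\bar{j}(t)$ in \cref{alg_trust_neig} we have $\beta_{i\bar{j}(t)}(t) \ge \beta_{il}(t)$. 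Hence $\{m \in \Ni(t)\}$ is contained in $\{\beta_{il}(t) - \beta_{im}(t) \le \xi_t\}$, so $\mathbb{P}(m \in \Ni(t)) \le \mathbb{P}(\beta_{il}(t) - \beta_{im}(t) \le \xi_t)$. This reduction to a fixed legitimate/malicious pair is the crux of the argument, and is precisely where \cref{as_trust}-\textit{(iii)} enters.

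Next I would analyze the scalar variable $\beta_{il}(t) - \beta_{im}(t) = \sum_{s=0}^{t}(\alpha_{il}(s) - \alpha_{im}(s))$. By \cref{as_trust}-\textit{(ii)} the $t+1$ summands are independent across time, each lies in $[-1,1]$, and by \cref{as_trust}-\textit{(i)} each has mean $d-c_m$, so the sum has mean $(t+1)(d-c_m) > 0$. Centering, the event $\{\beta_{il}(t) - \beta_{im}(t) \le \xi_t\}$ becomes the event that the centered sum falls below $\xi_t - (t+1)(d-c_m) = \xi(t+1)^{\gamma} - (t+1)(d-c_m)$. The hypothesis $t > (\xi/(d-c_m))^{1/(1-\gamma)}$ is exactly what guarantees $(t+1)(d-c_m) > \xi(t+1)^{\gamma}$, i.e.\ that this shift is strictly negative and we are genuinely in the lower tail, where a Chernoff--Hoeffding bound decays rather than being vacuous.

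I would then apply the one-sided Chernoff--Hoeffding inequality to the centered sum (each term has range $2$, so the denominator is $2(t+1)$, exactly as in \cref{lem_misp_legit}), obtaining
\begin{equation*}
\mathbb{P}(m \in \Ni(t)) \le \exp\left(-\frac{\big((t+1)(d-c_m) - \xi(t+1)^{\gamma}\big)^2}{2(t+1)}\right).
\end{equation*}
Finally, since $d - c_m \ge \lambda = \min_{m' \in \mathcal{M}}(d - c_{m'})$, replacing the mean gap $(t+1)(d-c_m)$ by its uniform lower bound $(t+1)\lambda$ loosens this to the stated exponential bound, uniform over the choice of malicious neighbor $m$.

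The only genuinely delicate points are the two flagged above: (i) collapsing the $\max$-based retention event to a single-pair event, which hinges on the guaranteed existence of a legitimate neighbor from \cref{as_trust}-\textit{(iii)}; and (ii) checking via the threshold on $t$ that the centered deviation lands in the lower tail, so that Hoeffding yields a decaying estimate. Everything else is a routine instantiation of the concentration step already carried out for the malicious term in \cref{lem_misp_legit}.
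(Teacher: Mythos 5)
Your argument follows the same route as the paper's proof, only phrased more economically: where the paper first establishes that $\{m\in\Ni(t)\}$ equals the intersection $\bigcap_{j\in\Ni}\{\beta_{ij}(t)-\beta_{im}(t)\le\xi_t\}$ and then passes to the minimum over neighbors, you go directly from $\beta_{i\bar{j}(t)}(t)\ge\beta_{il}(t)$ to the containment $\{m\in\Ni(t)\}\subseteq\{\beta_{il}(t)-\beta_{im}(t)\le\xi_t\}$ for a legitimate neighbor $l$ guaranteed by \cref{as_trust}-\textit{(iii)}. That is the same idea, and the centering and one-sided Hoeffding step are identical to the paper's.

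The one place you should not wave your hands is the last sentence. Hoeffding gives you
$\mathbb{P}(m\in\Ni(t))\le\exp\bigl(-((t+1)(d-c_m)-\xi(t+1)^{\gamma})^2/2(t+1)\bigr)$,
and replacing $d-c_m$ by $\lambda$ yields (once $(t+1)\lambda>\xi(t+1)^{\gamma}$, i.e.\ $t>(\xi/\lambda)^{1/(1-\gamma)}$, a slightly stronger time condition than the one in the lemma) the bound $\exp\bigl(-((t+1)\lambda-\xi(t+1)^{\gamma})^2/2(t+1)\bigr)$. This does \emph{not} loosen to the displayed bound $\exp\bigl(-(\xi(t+1)^{\gamma}+(t+1)\lambda)^2/2(t+1)\bigr)$: the latter has a \emph{plus} sign inside the square and is therefore strictly \emph{smaller} than what your (correct) computation produces. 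So your claim that the substitution ``loosens this to the stated exponential bound'' is false as written. To be fair, the paper's own proof has exactly the same disconnect --- it derives the deviation $\xi(t+1)^{\gamma}-(t+1)(d-c_m)$ and then asserts ``the desired result'' --- so the plus sign in the lemma statement appears to be an artifact copied from the malicious term of \cref{lem_misp_legit}, where the event is an upper-tail one and the plus sign is legitimate. The honest conclusion of your argument is the bound with $(t+1)\lambda-\xi(t+1)^{\gamma}$ in place of $\xi(t+1)^{\gamma}+(t+1)\lambda$; that weaker bound still decays geometrically and suffices for every downstream use in the paper, but you should state it as such rather than assert an implication that does not hold.
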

\begin{proof}
    By the definition of the trusted neighborhood in Algorithm \ref{alg_trust}, a malicious neighbor $m$ is misclassified when we have  
    $\beta_{i\bar{j}(t)}(t)-\beta_{im}(t) \leq \xi_t.$ We note that this event is equivalent to the intersection of events $ \bigcap_{j \in \Ni} \{ \beta_{ij}(t)-\beta_{im}(t) \leq \xi_t \}$. This equivalence holds as we validate it from both directions. The event that the maximum element $\beta_{i\bar{j}(t)}(t)\leq \beta_{im}(t)+\xi_t$, which holds by the definition of $\Ni(t)$ (from Alg.~\ref{alg_trust_neig}) and $\beta_{i\bar{j}(t)}(t)$, implies that $\beta_{ij}(t)\leq \beta_{im}(t)+\xi_t$ for all $j\in \Ni$ (see Fig. 1.b). Conversely, if $\beta_{ij}(t)\leq \beta_{im}(t)+\xi_t$ for all $j\in \Ni$, then we have $\beta_{i\bar{j}(t)}(t)\leq \beta_{ij}(t)+\xi_t$ since $\bar{j}(t)$ is also chosen from the set of all $j\in \Ni$ (see Step 2 in \cref{alg_trust_neig}). Thus we have that the event $\{\beta_{i\bar{j}(t)}(t)-\beta_{im}(t) \leq \xi_t\}$ is equivalent to the intersection of events $  \bigcap_{j \in \Ni} \{ \beta_{ij}(t)-\beta_{im}(t) \leq \xi_t \}$. Using this equality, we get
    \begin{align*}
        \mathbb{P}(m\in \Ni(t)) &= \mathbb{P}(\beta_{i\bar{j}(t)}(t)-\beta_{im}(t) \leq \xi_t) \\
        &= \mathbb{P}(\bigcap_{n \in \Ni} \{ \beta_{in}(t)-\beta_{im}(t) \leq \xi_t \}) \\
        &\leq \min_{n\in \Ni}\mathbb{P}( \beta_{in}(t)-\beta_{im}(t) \leq \xi_t).
    \end{align*}
    Consider an arbitrary legitimate neighbor of an agent $i$, $l \in \Nil$. We know that such a neighbor must exist due to \cref{as_trust}-\textit{(iii)}. Then, we have
\begin{align*}
    &\min_{n\in \Ni}\mathbb{P}( \beta_{in}(t)-\beta_{im}(t) \leq \xi_t) \leq \mathbb{P}( \beta_{il}(t)-\beta_{im}(t) \leq \xi_t) \\
    &= \mathbb{P}( \beta_{il}(t)-\beta_{im}(t) - \mathbb{E}(\beta_{il}(t)-\beta_{im}(t)) \nonumber \\
    &\hspace{15 mm} \leq \xi (t+1)^{\gamma} - (t+1) (d-c_m)).
    \end{align*}
    Then, for $t > \left(\frac{\xi}{d-c_m}\right)^{1/(1-\gamma)},$ we have $\xi (t+1)^{\gamma} - (t+1)(d-c_m)<0$. Therefore, we apply the Chernoff-Hoeffding inequality to obtain the desired result. 
\end{proof}
Lemmas \ref{lem_misp_legit} and \ref{lem_misp_mal} show the misclassification probabilities go to $0$, as $t \rightarrow \infty$, due to the Chernoff-Hoeffding bound. The result stems from the fact that the probability of deviation of random sums from their expectations can be bounded. Since the misclassification probabilities decrease geometrically over time, the misclassfications will occur only finitely many times. Therefore, legitimate agents will only assign weights to their legitimate neighbors after some finite time, i.e., the weight matrices will be equal to the nominal weight matrix. The next lemma states this result formally.
 \vspace{-2.5 mm}
\begin{lemma} \label{lem_als_w}
    Suppose Assumption \ref{as_trust} holds. 
    Let $\xi>0$ and $\gamma \in (0.5,1)$ be the parameters defined in Algorithm \ref{alg_trust}.There exists a (random) finite time $T_f >0$ such that $W_{\ccalL} (t)= \widebar{W}_{\ccalL}$ for all $t \ge T_f$. Furthermore, it holds almost surely
    \begin{equation}
        \prod_{t=T_0-1}^\infty \wl (t)= \mathbf{1}\nu^T \bigg (  \prod_{t=T_0-1}^{\max \{T_f,T_0\}-1} \wl (t) \bigg ),
    \end{equation}
    where the matrix product $\prod_{t=T_0-1}^\infty \wl (t) >{\bf 0}$ for any $T_0\ge0$ almost surely, and $\nu >{\bf 0}$ is a stochastic vector. 
\end{lemma}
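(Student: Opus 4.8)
The plan is to establish the two assertions in sequence: first the existence of the random finite time $T_f$ after which the learned weight matrices coincide with the nominal matrix $\barwl$, and then the convergence and positivity of the infinite backward product.

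For the first part I would invoke the Borel--Cantelli lemma. Fix a legitimate agent $i$ and one of its neighbors. Lemmas~\ref{lem_misp_legit} and~\ref{lem_misp_mal} bound the per-step misclassification probability by terms of the form $\exp(-\xi^2 (t+1)^{2\gamma-1}/2)$ and related expressions. The decisive point is that $\gamma\in(0.5,1)$ forces the exponent $2\gamma-1>0$, so $(t+1)^{2\gamma-1}\to\infty$; in particular $\xi^2(t+1)^{2\gamma-1}/2\ge 2\log(t+1)$ eventually, whence each bound is dominated by $(t+1)^{-2}$ and the series $\sum_t \mathbb{P}(\text{misclassification at step }t)$ converges. (The bound of \cref{lem_misp_mal} only holds past a threshold time, but this is immaterial, since finitely many initial terms do not affect convergence of the series.) Borel--Cantelli then gives that for this pair, misclassifications occur only finitely often almost surely. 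Since there are finitely many legitimate agents and finitely many neighbors, taking a union over all pairs---a finite union of probability-zero ``infinitely often'' events---yields a single random but almost surely finite time $T_f$ after which every legitimate agent classifies all of its neighbors correctly, i.e., $\Ni(t)=\Nil$ for all $i\in\ccalL$. By the weight definitions~\eqref{eq_wij} and~\eqref{eq_barwij}, correct classification forces $\wl(t)=\barwl$ for all $t\ge T_f$.

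For the second part, set $T^\star=\max\{T_f,T_0\}$ and split the backward product at $T^\star$: for $t\ge T_f$ we have $\wl(t)=\barwl$, so $\prod_{t=T_0-1}^{T}\wl(t)=\barwl^{\,T-T^\star+1}\big(\prod_{t=T_0-1}^{T^\star-1}\wl(t)\big)$ for $T\ge T^\star$. The matrix $\barwl$ is nonnegative, row-stochastic, and has a strictly positive diagonal; by Assumption~\ref{as_trust}-\textit{(iii)} the induced legitimate subgraph $G_{\ccalL}$ is connected, so $\barwl$ is irreducible, and the positive diagonal makes it aperiodic, hence primitive. Perron--Frobenius theory then gives $\lim_{k\to\infty}\barwl^{\,k}=\mathbf{1}\nu^T$, where $\nu>\mathbf{0}$ is the unique, strictly positive stationary (stochastic) vector of $\barwl$. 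Letting $T\to\infty$ produces the claimed identity $\prod_{t=T_0-1}^\infty \wl(t)=\mathbf{1}\nu^T\big(\prod_{t=T_0-1}^{\max\{T_f,T_0\}-1}\wl(t)\big)$, and in particular shows the infinite product is well defined.

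For the positivity claim, write $M=\prod_{t=T_0-1}^{T^\star-1}\wl(t)$. Each factor $\wl(t)$ is nonnegative and row-stochastic with $w_{ii}(t)>0$ (from~\eqref{eq_wij}); since $(AB)_{ii}\ge A_{ii}B_{ii}$ for nonnegative matrices, $M$ inherits a strictly positive diagonal, so every column of $M$ is nonzero. The $(i,j)$ entry of $\mathbf{1}\nu^T M$ equals $(\nu^T M)_j=\sum_k \nu_k [M]_{kj}$, which is strictly positive because $\nu>\mathbf{0}$ and column $j$ of $M$ has a positive entry; hence $\prod_{t=T_0-1}^\infty \wl(t)>\mathbf{0}$. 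The main obstacle I anticipate lies in the first part---handling the two distinct misclassification modes together and verifying their summability so that a single finite $T_f$ serves all agents simultaneously; once $\wl(t)$ is frozen at the primitive matrix $\barwl$, the remainder reduces to standard Perron--Frobenius and nonnegative-matrix arguments.
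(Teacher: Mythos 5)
Your proposal is correct and follows essentially the same route as the paper's proof: the paper likewise deduces $T_f$ from the summability of the misclassification probabilities of Lemmas~\ref{lem_misp_legit} and \ref{lem_misp_mal} (i.e., an implicit Borel--Cantelli argument), splits the backward product at $\max\{T_f,T_0\}$, applies Perron--Frobenius to the primitive matrix $\widebar{W}_{\ccalL}$, and obtains positivity from the positive diagonals of the $\wl(t)$ together with $\nu>\mathbf{0}$. Your write-up is simply more explicit about the summability estimate, the finite union over agent pairs, and the column-nonvanishing step in the positivity claim.
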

\begin{proof}
     Using Lemmas \ref{lem_misp_legit} and \ref{lem_misp_mal}, legitimate agents have geometrically decaying misclassification probabilities.
    The infinite sums of misclassification probabilities satisfy $\sum_{t=0}^\infty \mathbb{P}( j \not \in \Ni (t))= \sum_{t=0}^\infty  O(\exp(-\xi^2 (t+1)^{2\gamma}/2(t+1)) < \infty $ for legitimate neighbors $ j \in \Nil$, and $\sum_{t=0}^\infty  \mathbb{P}( j  \in \Ni (t))=\sum_{t=0}^{T'-1} \mathbb{P}( j  \in \Ni (t))+\sum_{t=T'}^{\infty} O(\exp(-(\xi (t+1)^{\gamma} + (t+1)\lambda)^2/2(t+1))) < \infty$ 
    for all $t \ge T'$,
    where $ T' \ge \left(\frac{\xi}{d-c_m}\right)^{1/(1-\gamma)}$.  
   Hence, there exists a finite time $T_f$ such that we have $W_{\ccalL} (t)= \widebar{W}_{\ccalL}$ for all $t \ge T_f$.

    As $\wl (t)= \widebar{W}_{\ccalL}$ for all $t \ge T_f$, for  the product of the matrices $\wl(t)$, we have,
\vspace{-1 mm}
    \begin{align*}
        \prod_{t=T_0-1}^\infty \wl (t)&= \prod_{t=\max\{T_0,T_f\}}^\infty \wl (t) \prod^{\max\{T_0,T_f\}-1}_{t=T_0-1} \wl (t)\\
        &= \prod_{t=\max\{T_0,T_f\}}^\infty \widebar{W}_{\ccalL}  \prod^{\max\{T_0,T_f\}-1}_{t=T_0-1} \wl (t)\\
        &= \lim_{t \rightarrow \infty} \widebar{W}^{t-\max\{T_0,T_f\}}_{\ccalL} \prod^{\max\{T_0,T_f\}-1}_{t=T_0-1} \wl (t).
    \end{align*}
By Assumption \ref{as_trust}, the subgraph induced by the legitimate agents is connected and, by the definition of $\widebar{W}_{\ccalL}$, it follows that $\widebar{W}_{\ccalL}$ implies is a primitive stochastic matrix. Therefore,by the Perron-Frobenius Theorem, we have that $\lim_{t \rightarrow \infty} \widebar{W}^{t-\max\{T_0,T_f\}}_{\ccalL} = 1 \nu^T$, with $\nu>{\bf 0}$, and
\begin{equation*}
    \prod_{t=T_0-1}^\infty \wl (t)= \mathbf{1}\nu^T \bigg (  \prod_{t=T_0-1}^{\max \{T_f,T_0\}-1} \wl (t) \bigg ).
\end{equation*}
In addition, note that as $\nu$ is a stochastic and that 
diagonal entries of $\wl(t)$ are positive. Thus, $\prod_{t=T_0-1}^\infty \wl (t) >\mathbf{0}$  almost surely and $\nu^T\bigg (  \prod_{t=T_0-1}^{\max \{T_f,T_0\}-1} \wl (t) \bigg ) >\mathbf{0}$.
\end{proof}

We state the result on the consensus dynamics defined in \eqref{eq_con_sum} using the convergence of weight matrices.

\begin{lemma} \label{lem_con_legit}
    Suppose Assumption \ref{as_trust} holds. 
    In Algorithm \ref{alg_trust}, let $\xi>0$ and $\gamma \in (0.5,1)$. Let $\xl(0)$ be the initial values of legitimate agents. Then, $\Tilde{x}_{\ccalL}(T_0,t)$ converges almost surely, i.e., almost surely
    \begin{equation*}
            \lim_{t \rightarrow \infty}  \Tilde{x}_{\ccalL}(T_0,t)= \bigg ( \prod_{k=T_0-1}^{\infty} \wl(k) \bigg)\xl(0) = y\mathbf{1},
    \end{equation*}
where $y \in \mathbb{R}$ is a random variable depending on $T_f$ and $T_0$.
\end{lemma}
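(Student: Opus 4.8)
The plan is to obtain the statement as an almost immediate corollary of \cref{lem_als_w}, which has already established the existence and the rank-one form of the infinite backward product of the weight matrices. First I would recall from \eqref{eq_con_part1} that $\Tilde{x}_{\ccalL}(T_0,t) = \left(\prod_{k=T_0-1}^{t-1} \wl(k)\right)\xl(0)$, so that the claimed limit is nothing but the infinite product $\prod_{k=T_0-1}^{\infty}\wl(k)$ applied to the fixed, deterministic initial vector $\xl(0)$. Since $\xl(0)$ is a fixed finite-dimensional vector and the matrix product converges almost surely (entrywise) by \cref{lem_als_w}, multiplication by $\xl(0)$ is continuous and the limit passes through the product.

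Next I would substitute the explicit almost-sure limit from \cref{lem_als_w}. This yields, almost surely,
\[
\lim_{t \rightarrow \infty}\Tilde{x}_{\ccalL}(T_0,t) = \bigg(\prod_{k=T_0-1}^{\infty}\wl(k)\bigg)\xl(0) = \mathbf{1}\nu^T\bigg(\prod_{t=T_0-1}^{\max\{T_f,T_0\}-1}\wl(t)\bigg)\xl(0).
\]
The only observation remaining is that the trailing factor $\nu^T\big(\prod_{t=T_0-1}^{\max\{T_f,T_0\}-1}\wl(t)\big)\xl(0)$ is the product of a row vector, a square matrix, and a column vector, hence a scalar. Defining $y := \nu^T\big(\prod_{t=T_0-1}^{\max\{T_f,T_0\}-1}\wl(t)\big)\xl(0)$, the right-hand side is exactly $y\mathbf{1}$, which is the consensus (constant) vector asserted.

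Finally I would record the dependence of $y$. The vector $\nu$ is deterministic, being the Perron left eigenvector of the fixed nominal matrix $\widebar{W}_{\ccalL}$ defined in \eqref{eq_barwij}; however, the finite ``burn-in'' product $\prod_{t=T_0-1}^{\max\{T_f,T_0\}-1}\wl(t)$ is random, since both the weight matrices $\wl(t)$ for $t < T_f$ and the stopping time $T_f$ itself are random. Consequently $y$ is a random variable depending on $T_f$ and $T_0$, precisely as stated.

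There is no serious obstacle here: the entire content has been front-loaded into \cref{lem_als_w}, and what is left is the continuity argument for passing the limit through multiplication by $\xl(0)$ together with the bookkeeping observation that a rank-one limit matrix $\mathbf{1}\nu^T(\cdots)$ necessarily maps every input to a constant (consensus) vector. The one point meriting a sentence of care is that the convergence inherited from \cref{lem_als_w} is almost sure rather than in expectation, so the final conclusion is likewise an almost-sure statement.
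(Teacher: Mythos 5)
Your proposal is correct and follows essentially the same route the paper takes: the paper's proof defers to Proposition~2 of \cite{yemini2021characterizing}, which is precisely the argument of reducing everything to the almost-sure rank-one limit of the backward matrix product established in \cref{lem_als_w} and then identifying $y=\nu^T\big(\prod_{t=T_0-1}^{\max\{T_f,T_0\}-1}\wl(t)\big)\xl(0)$. No gaps; your remarks on the randomness of the burn-in product and on almost-sure (rather than mean) convergence are exactly the right points of care.
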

\begin{proof}
    The proof follows along the lines of Proposition 2 in \cite{yemini2021characterizing}.
\end{proof}

We next state the limit of $\phi_{\ccalM}(T_0,t)$ (see~\eqref{eq_sep_Dy}) in the consensus process.

\begin{lemma} \label{lem_con_mal}
    Suppose Assumption \ref{as_trust} holds. 
    In Algorithm \ref{alg_trust}, let $\xi>0$ and $\gamma \in (0.5,1)$. Then, $\phi_{\ccalM}(T_0,t)$ converges almost surely, i.e., we have almost surely
    \begin{align*}
            \lim_{t \rightarrow \infty}\phi_{\ccalM}(T_0,t)&=  \sum_{k=T_0-1}^{\infty} \bigg ( \prod_{\ell=k+1}^{\infty} \wl(\ell) \bigg) \wm(k) \xm(k)\\
            &= f\mathbf{1},
    \end{align*}
where $f \in \mathbb{R}$ is a random variable depending on $T_f$ and $T_0$.
\end{lemma}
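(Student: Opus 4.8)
The plan is to exploit the finite-time stabilization established in \cref{lem_als_w} to collapse the infinite sum defining $\phi_{\ccalM}(T_0,t)$ into a finite sum whose matrix product in each term converges. The crucial observation is that once the neighborhood classification stabilizes at the random finite time $T_f$, each legitimate agent assigns positive weight only to its legitimate neighbors; by the weight rule \eqref{eq_wij} and the definition of $\widebar{W}_{\ccalL}$ in \eqref{eq_barwij}, this forces $\wm(k)=\mathbf{0}$ for all $k\ge T_f$. Consequently, in the expansion
\[
\phi_{\ccalM}(T_0,t)=\sum_{k=T_0-1}^{t-1}\Big(\prod_{\ell=k+1}^{t-1}\wl(\ell)\Big)\wm(k)\xm(k),
\]
every summand with $k\ge T_f$ vanishes, so only finitely many terms (those with $T_0-1\le k\le T_f-1$) are nonzero, and this holds uniformly in $t$.

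Next I would fix one such index $k$ and take $t\to\infty$ in the backward product $\prod_{\ell=k+1}^{t-1}\wl(\ell)$. Splitting off the factors with index at least $T_f$, which all equal the primitive stochastic matrix $\widebar{W}_{\ccalL}$, and invoking the Perron--Frobenius limit $\lim_{n\to\infty}\widebar{W}_{\ccalL}^{\,n}=\mathbf{1}\nu^T$ exactly as in \cref{lem_als_w}, yields
\[
\lim_{t\to\infty}\prod_{\ell=k+1}^{t-1}\wl(\ell)=\mathbf{1}\nu^T\prod_{\ell=k+1}^{\max\{T_f,k+1\}-1}\wl(\ell),
\]
a rank-one matrix of the form $\mathbf{1}$ times a row vector. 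Multiplying on the right by the column $\wm(k)\xm(k)$, whose entries are bounded by $\eta$ because the rows of $[\wl(k)\ \wm(k)]$ are substochastic and $|x_m(k)|\le\eta$, therefore produces a scalar multiple of $\mathbf{1}$.

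Finally, since there are only finitely many nonzero terms and $T_f$ is almost surely finite, I would interchange the (finite) sum with the limit in $t$ to conclude that $\lim_{t\to\infty}\phi_{\ccalM}(T_0,t)$ equals a finite sum of scalar multiples of $\mathbf{1}$, i.e. $f\mathbf{1}$, where the random scalar $f$ is determined by $T_f$, $T_0$, the stabilized partial products, and the malicious inputs $\{\xm(k)\}$. The main obstacle I anticipate is the bookkeeping needed to justify rigorously that $\wm(k)=\mathbf{0}$ for $k\ge T_f$ and to handle the two regimes ($k+1\le T_f$ versus $k+1> T_f$) cleanly when splitting the product; once this is in place, the boundedness of the inputs together with the almost-sure finiteness of $T_f$ makes both the term-wise convergence and the interchange of the limit with the finite sum routine.
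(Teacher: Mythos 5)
Your proposal is correct and follows essentially the same route as the paper: the paper's proof defers to Proposition~3 of \cite{yemini2021characterizing}, whose argument is precisely your truncation of the sum at the almost-surely finite time $T_f$ (where $\wm(k)=\mathbf{0}$ by \cref{lem_als_w}) followed by the Perron--Frobenius limit $\lim_{n\to\infty}\widebar{W}_{\ccalL}^{\,n}=\mathbf{1}\nu^T$ applied to each of the finitely many remaining terms. No gaps; the bookkeeping you flag (splitting the product at $\max\{T_f,k+1\}$) is exactly how the reference handles it.
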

\begin{proof}
  The proof follows from Proposition 3 in \cite{yemini2021characterizing}, using the almost sure convergence of $W_{\ccalL}(t)$ and  $\wm (t)$.
\end{proof}

We now state the final result of this section concluding that legitimate agents reach a common value asymptotically,   
\begin{corollary} \label{cor_con}
    Suppose Assumption~\ref{as_trust} holds, and let $\xi>0$ and $\gamma \in (0.5,1)$ in Algorithm~\ref{alg_trust}. Then, the consensus protocol~\eqref{eq_con_dy} among the legitimate agents  converges almost surely, i.e.,
    \begin{equation}
         \lim_{t \rightarrow \infty}  \xl(T_0,t)=  z\mathbf{1}\quad\hbox{almost surely},
    \end{equation}
where $z \in \mathbb{R}$ is a random variable given by $z=y+f$, with $y$ and $f$ from Lemma~\ref{lem_con_legit} and Lemma~\ref{lem_con_mal}, respectively.
\end{corollary}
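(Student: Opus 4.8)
The plan is to exploit the additive decomposition of the legitimate agents' state that was established back in~\eqref{eq_con_sum}, namely $\xl(T_0,t)= \Tilde{x}_{\ccalL}(T_0,t)+\phi_{\ccalM}(T_0,t)$, which holds deterministically for every $t\ge T_0$ and every sample path. Since this identity already separates the influence of the legitimate agents from that of the malicious agents, the entire burden of proving convergence has effectively been offloaded to the two preceding lemmas. Concretely, I would first invoke \cref{lem_con_legit} to assert that $\Tilde{x}_{\ccalL}(T_0,t)\to y\mathbf{1}$ almost surely, and then invoke \cref{lem_con_mal} to assert that $\phi_{\ccalM}(T_0,t)\to f\mathbf{1}$ almost surely, where $y$ and $f$ are the random scalars produced by those lemmas.

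The second step is to combine the two limits. I would let $\Omega_1$ and $\Omega_2$ denote the probability-one events on which the convergences of \cref{lem_con_legit} and \cref{lem_con_mal} respectively hold. On the intersection $\Omega_1\cap\Omega_2$, which again has probability one (a finite intersection of almost-sure events is almost sure), both limits exist simultaneously, so by the linearity of limits
\begin{equation*}
    \lim_{t \rightarrow \infty} \xl(T_0,t)
    = \lim_{t \rightarrow \infty}\Tilde{x}_{\ccalL}(T_0,t) + \lim_{t \rightarrow \infty}\phi_{\ccalM}(T_0,t)
    = y\mathbf{1}+f\mathbf{1}.
\end{equation*}
Setting $z:=y+f$ then yields $\lim_{t\to\infty}\xl(T_0,t)=z\mathbf{1}$ almost surely, which is exactly the claimed statement; since both $y$ and $f$ depend on the random finite switching time $T_f$ and on $T_0$ (through the partial products appearing in the two lemmas), so does $z$, consistent with the corollary's assertion that $z$ is a random variable.

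I do not anticipate a genuine obstacle here, as the result is essentially a bookkeeping consequence of the two lemmas rather than a new technical argument. The only point that warrants care is the almost-sure qualifier: one must not conclude convergence pathwise from convergence of the two summands without first intersecting the two underlying probability-one events, but this is immediate. If one wanted to be fully explicit about the vector nature of the statement, one could add a sentence noting that convergence in $\reals^{|\ccalL|}$ is equivalent to coordinatewise convergence, so that combining the two coordinatewise-constant limit vectors $y\mathbf{1}$ and $f\mathbf{1}$ again yields a vector with identical entries, confirming that all legitimate agents reach the common value $z$.
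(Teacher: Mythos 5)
Your proposal is correct and follows exactly the paper's own argument: the paper likewise combines the decomposition in~\eqref{eq_con_sum} with Lemmas~\ref{lem_con_legit} and~\ref{lem_con_mal}, and your only addition is the (routine but welcome) explicit intersection of the two probability-one events before adding the limits.
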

 
\begin{proof}
    The result follows by the relation $\xl(T_0,t)=\Tilde{x}_{\ccalL}(T_0,t)+\phi_{\ccalM}(T_0,t)$ (see~\eqref{eq_con_sum}) and Lemmas \ref{lem_con_legit}--\ref{lem_con_mal}.
\end{proof}

Corollary~\ref{cor_con} states that the legitimate agents reach the same random scalar value $z$ almost surely. However, the consensus value $z$ can be outside the convex hull of the initial values $x_{\ccalL}(0)$ of legitimate agents, unlike the result of the standard consensus process. 
\subsection{Deviation from Consensus under Intermittent Failures and Attacks}

In this part, we identify the deviation from the nominal consensus value $\mathbf{1}\nu^T x_{\ccalL}(0)$, which is a desired outcome for a system with malicious agents as $\lim_{t\xrightarrow{} \infty} \barwl^t=\mathbf{1}\nu^T$, while in the ideal case we would have $\wl(t)=\barwl$ and $\wm(t)=\mathbf{0}$ for all $t \ge T_0-1$. Since we do not have any structural assumptions on the dynamics of malicious agents in Eq.~\eqref{eq_con_dy}, we adopt the worst-case approach, based on the elimination of weights of malicious agents in a (random) finite time.

We first quantify the probability that the system does not follow the nominal consensus dynamics after the observation window $T_0$. 
\vspace{-2.5 mm}
\begin{lemma} \label{lem_WlnotbarW}
 Suppose Assumption \ref{as_trust} holds. Let $\xi>0$ and $\gamma \in (0.5,1)$ be parameters as defined in Algorithm \ref{alg_trust}. For $T_0 > \left(\frac{\xi}{\lambda}\right)^{1/(1-\gamma)}$, the probability of the event that there exists a time instance $k\ge T_0-1$, such that the matrix $\wl(k)$ at time $k$ is not equal to the nominal weight matrix $\widebar{W}_{\ccalL}$, is bounded as follows:
\begin{align*}
    &\mathbb{P} ( \exists k \ge T_0-1: \wl(k) \neq \widebar{W}_{\ccalL} ) \nonumber\\
    &\le |\mathcal{L}|.|\mathcal{M}|(|\mathcal{L}|+1)\Bigg( \frac{\exp((-\xi^2/2) \lambda^2 T_0)}{1-\exp((-\xi^2/2) \lambda^2)}\Bigg) \nonumber \\
    &\hspace{-3 mm}+ |\mathcal{L}|^3 \Bigg(\frac{2^{\frac{1}{2\gamma-1}}}{(2\gamma-1) \xi^{\frac{2}{2\gamma-1}}}\Bigg) \Gamma \Bigg(\frac{1}{(2\gamma-1)}, (\xi^2/2) (T_0-1)^{2\gamma-1} \Bigg),
\end{align*}
where $\lambda = \min_{m \in \mathcal{M}} (d-c_m)$ and 
$\Gamma(.,.)$ is the upper incomplete gamma function i.e.,
$\Gamma(s,q)=  \int_{q}^\infty \exp(-u)  u^{s-1} du.$
\end{lemma}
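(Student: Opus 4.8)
The plan is to dominate the bad event by a union, over time and over agents, of the single-step misclassification events already controlled in \cref{lem_misp_legit,lem_misp_mal}, and then to sum the resulting exponential tails in closed form. First I would note that if, at a given time $k$, no legitimate agent misclassifies any of its neighbors---i.e. $\Ni(k)=\Nil$ for every $i\in\mathcal{L}$---then the weight rule \eqref{eq_wij} reproduces exactly the nominal weights \eqref{eq_barwij}, so $\wl(k)=\widebar{W}_{\ccalL}$. Contrapositively,
\[
\{\exists k\ge T_0-1:\ \wl(k)\neq\widebar{W}_{\ccalL}\}
\subseteq \bigcup_{k\ge T_0-1}\bigcup_{i\in\mathcal{L}}
\Big(\{\exists j\in\Nil: j\notin\Ni(k)\}\cup\{\exists m\in\Nim: m\in\Ni(k)\}\Big),
\]
and a union bound over $k$, over $i\in\mathcal{L}$, and over the relevant neighbors reduces the claim to summing, over $k\ge T_0-1$, the per-agent estimates of \cref{lem_misp_legit} (excluded legitimate neighbors) and \cref{lem_misp_mal} (included malicious neighbors).

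Next I would group the terms by decay rate. Summing the first term of \cref{lem_misp_legit} over $j\in\Nil$ and $i\in\mathcal{L}$ and using $|\Nil|\le|\mathcal{L}|$ gives the factor $|\mathcal{L}|^3$ in front of $\exp(-\xi^2(k+1)^{2\gamma-1}/2)$; adding the second term of \cref{lem_misp_legit} (which contributes $|\mathcal{L}|^2|\mathcal{M}|$ after using $d-c_m\ge\lambda$) to \cref{lem_misp_mal} summed over $m\in\Nim$ and $i$ (contributing $|\mathcal{L}||\mathcal{M}|$) gives the factor $|\mathcal{L}||\mathcal{M}|(|\mathcal{L}|+1)$ in front of $\exp(-(\xi(k+1)^\gamma+\lambda(k+1))^2/2(k+1))$. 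The hypothesis $T_0>(\xi/\lambda)^{1/(1-\gamma)}$ enters precisely here: since $\lambda\le d-c_m$ for every $m$, it guarantees that the threshold $t>(\xi/(d-c_m))^{1/(1-\gamma)}$ of \cref{lem_misp_mal} is met for all $k\ge T_0-1$, so that lemma may be applied throughout the sum.

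It then remains to sum the two families over $k\ge T_0-1$. For the geometric family I would discard the nonnegative superlinear contributions inside the exponent and keep only the part linear in $(k+1)$, which turns the per-step bound into a true geometric term; summing the geometric series from $k=T_0-1$ yields the first (closed-form) summand of the bound. For the stretched-exponential family, the summand $\exp(-\xi^2(k+1)^{2\gamma-1}/2)$ is decreasing in $k$ because $2\gamma-1>0$, so I would compare the sum to $\int_{T_0-1}^\infty \exp(-\tfrac{\xi^2}{2}u^{2\gamma-1})\,du$ and substitute $v=\tfrac{\xi^2}{2}u^{2\gamma-1}$; this rewrites the integral as $\tfrac{2^{1/(2\gamma-1)}}{(2\gamma-1)\,\xi^{2/(2\gamma-1)}}\,\Gamma\!\big(\tfrac{1}{2\gamma-1},\tfrac{\xi^2}{2}(T_0-1)^{2\gamma-1}\big)$, the second summand.

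The main obstacle I expect is the stretched-exponential family: the sum-to-integral comparison must be justified by monotonicity, and the change of variables carried out carefully so that the prefactor and the lower limit of the incomplete gamma function come out exactly as stated (in particular the argument $\tfrac{\xi^2}{2}(T_0-1)^{2\gamma-1}$ traces back to the lower endpoint $u=T_0-1$). By contrast, separating the two families, tracking the combinatorial factors $|\mathcal{L}|^3$ and $|\mathcal{L}||\mathcal{M}|(|\mathcal{L}|+1)$, and summing the geometric family are routine.
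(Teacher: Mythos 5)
Your proposal is correct and follows essentially the same route as the paper's proof: the same union bound over time and over legitimate agents, the same application of Lemmas~\ref{lem_misp_legit} and~\ref{lem_misp_mal} with the same grouping into the $|\mathcal{L}|^3$ and $|\mathcal{L}||\mathcal{M}|(|\mathcal{L}|+1)$ families, the same geometric summation, and the same monotone sum-to-integral comparison with the change of variables producing the upper incomplete gamma term. The only discrepancy is cosmetic: dropping the superlinear parts of the exponent gives a geometric ratio $\exp(-\lambda^2/2)$ rather than the $\exp(-\xi^2\lambda^2/2)$ appearing in the stated bound, but this mismatch originates in the paper's own algebra (which inserts the spurious $\xi^2$ without justification), not in your argument.
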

\begin{proof}
    Firstly, we rewrite the event as a union of the following events: 
    \begin{align*}
    &\mathbb{P} ( \exists k \ge T_0-1: \wl(k) \neq \widebar{W}_{\ccalL} ) \\
    &= \mathbb{P} \Big ( \bigcup_{k \ge T_0-1} \{ \wl(k) \neq \widebar{W}_{\ccalL} \} \Big ) \\
    &\le  \sum_{k \ge T_0-1}  \mathbb{P} (\{ \wl(k) \neq \widebar{W}_{\ccalL} \} ), 
    \end{align*}
where the last inequality follows from the union bound.
    The event $\{ \wl(k) \neq \widebar{W}_{\ccalL} \} $ is equivalent to having at least one misclassified agent among neighbors of a legitimate agent at time $k$. Hence,

\begin{align}
    &\mathbb{P} ( \wl(k) \neq \widebar{W}_{\ccalL} ) \cr
    &=   \mathbb{P} \Big( \bigcup_{\substack{i\in \mathcal{L} \\  l \in \Nil}} \{ l \not \in \Ni (k) \} \bigcup_{\substack{i\in \mathcal{L} \\  m \in \Nim}} \{ m  \in \Ni (k) \} \Big) \cr
    &\le  \sum_{\substack{i\in \mathcal{L} \\  l \in \Nil}}  \mathbb{P} ( l \not \in \Ni (k))  + \sum_{\substack{i\in \mathcal{L} \\  m \in \Nim}} \mathbb{P}( m  \in \Ni (k)) \cr
    &\le |\mathcal{L}|.|\mathcal{M}|(|\mathcal{L}|+1)(\exp((-\xi^2/2) \lambda^2(k+1)))) \cr
    &+|\mathcal{L}|^3\exp((-\xi^2/2) (k+1)^{(2\gamma-1)}) \label{eq_sum_bound},
\end{align}
where the last inequality follows from Lemmas \ref{lem_misp_legit}-\ref{lem_misp_mal}. As both terms in Eq. \eqref{eq_sum_bound} are geometrically decaying given $\lambda >0$ and  $\gamma \in (0.5, 1) $, the joint sum over time and agents is finite and can be estimated as follows:
\begin{align*}
    &\mathbb{P} ( \exists k \ge T_0-1, \exists i \in \mathcal{L}: \wl(k) \neq \widebar{W}_{\ccalL} ) \\
    &\le   \sum_{k = T_0-1}^\infty  \mathbb{P} (\{ \wl(k) \neq \widebar{W}_{\ccalL} \} ) \\
    &\le |\mathcal{L}|.|\mathcal{M}|(|\mathcal{L}|+1)\sum_{k=T_0-1}^\infty\exp((-\xi^2/2) (d-c_m)^2(k+1)) \nonumber\\
    &+|\mathcal{L}|^3 \int_{T_0-2}^\infty \exp (-(\xi^2/2)(s+1)^{(2\gamma -1)}) ds\cr
    & \le |\mathcal{L}|.|\mathcal{M}|(|\mathcal{L}|+1)\Bigg( \frac{\exp((-\xi^2/2) \lambda^2 T_0)}{1-\exp((-\xi^2/2) \lambda^2)}\Bigg) \\
    &+ |\mathcal{L}|^3 \Bigg(\frac{2^{\frac{1}{2\gamma-1}}}{(2\gamma-1) \xi^{\frac{2}{2\gamma-1}}}\Bigg) \Gamma \Bigg(\frac{1}{(2\gamma-1)}, (\xi^2/2) (T_0-1)^{2\gamma-1} \Bigg),
\end{align*}
where the first term in the last inequality follows from the geometric sum in the second inequality. The second term 
in the last inequality is obtained by using a change of variable in the integral in the second inequality and 
the definition of the upper incomplete gamma function, i.e., $\Gamma(s,q)=  \int_{q}^\infty \exp(-u)  u^{s-1} du.$
\end{proof}

In Lemma \ref{lem_WlnotbarW}, we derived the bound, using the fact that not following the nominal consensus dynamics is equivalent to a misclassification error done by at least one legitimate agent in the system. Now, we are ready to analyze the deviation resulting from the misclassification of legitimate agents. 
\vspace{-1 mm}
\begin{lemma} \label{lem_dev_p1}
    Suppose Assumption \ref{as_trust} holds. Let $\xi>0$ and $\gamma \in (0.5,1)$ be as defined in Algorithm \ref{alg_trust}. Let $\varphi_i(T_0,t)$ be a deviation experienced by a legitimate agent $i \in \mathcal{L}$ defined formally as follows, for all $t\ge T_0$,

    \begin{equation}\label{eq_dev}
        \varphi_i(T_0,t):= \Bigg| \Bigg [ \Tilde{x}_{\ccalL}(T_0,t)-\bigg ( \prod_{k=T_0-1}^{t-1} \barwl \bigg)\xl(0) \Bigg]_i \Bigg|.
    \end{equation}
Then, for an error level $\delta>0$ and $T_0 > \left(\frac{\xi}{\lambda}\right)^{1/(1-\gamma)}$, we have
\begin{align*}
    \mathbb{P} \Big ( \max_{i \in \mathcal{L}} \: \limsup_{t \rightarrow \infty} \varphi_i(T_0,t) > \frac{2\eta}
    {\delta} g_{\mathcal{L}} (T_0) \Big ) < \delta , 
\end{align*}
where $\eta \ge \sup_{i \in \mathcal{N}, t \in \mathbb{N} } |x_i(t)|$, $\lambda = \min_{m \in \mathcal{M}} (d-c_m)$, and we define 
\begin{align} \label{eq_gl}
 &g_{\mathcal{L}} (T_0):= |\mathcal{L}|.|\mathcal{M}|(|\mathcal{L}|+1)\Bigg( \frac{\exp((-\xi^2/2) \lambda^2 T_0)}{1-\exp((-\xi^2/2) \lambda^2)}\Bigg) \nonumber\\
    &\hspace{-3 mm}+ |\mathcal{L}|^3 \Bigg(\frac{2^{\frac{1}{2\gamma-1}}}{(2\gamma-1) \xi^{\frac{2}{2\gamma-1}}}\Bigg) \Gamma \Bigg(\frac{1}{(2\gamma-1)}, (\xi^2/2) (T_0-1)^{2\gamma-1} \Bigg). 
\end{align}
\end{lemma}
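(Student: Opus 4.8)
The plan is to reduce the whole statement to the single probabilistic event that the realized weight matrix ever differs from the nominal one---for which Lemma~\ref{lem_WlnotbarW} already supplies the tail bound $g_{\mathcal{L}}(T_0)$---and then to convert this into the deviation bound through a deterministic boundedness argument followed by Markov's inequality. Concretely, I would introduce the event
\[
A := \{\, \exists\, k \ge T_0-1 :\ \wl(k) \neq \barwl \,\},
\]
and recall from Lemma~\ref{lem_WlnotbarW} that, since $T_0 > (\xi/\lambda)^{1/(1-\gamma)}$, we have $\mathbb{P}(A) \le g_{\mathcal{L}}(T_0)$ with $g_{\mathcal{L}}(T_0)$ as in~\eqref{eq_gl}.

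The first key step is a pathwise (deterministic) bound on $\varphi_i(T_0,t)$. Because every agent value obeys $|x_j(t)|\le\eta$, and because both $\prod_{k=T_0-1}^{t-1}\wl(k)$ (a product of entrywise-nonnegative matrices whose row sums are at most one, since a legitimate agent may still allocate weight to $\wm(k)$ before classifying correctly) and $\prod_{k=T_0-1}^{t-1}\barwl=\barwl^{\,t-T_0+1}$ (row-stochastic) map $\xl(0)$ componentwise into $[-\eta,\eta]$, each of the two bracketed terms in~\eqref{eq_dev} is bounded in absolute value by $\eta$. Hence $\varphi_i(T_0,t)\le 2\eta$ for all $i$ and $t$, surely. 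The complementary step is that on $A^{c}$ we have $\wl(k)=\barwl$ for every $k\ge T_0-1$, so $\Tilde{x}_{\ccalL}(T_0,t)=\barwl^{\,t-T_0+1}\xl(0)$ and therefore $\varphi_i(T_0,t)=0$ identically in $t$ and $i$. Combining these two observations yields the pathwise domination
\[
\max_{i \in \mathcal{L}} \; \limsup_{t \to \infty} \varphi_i(T_0,t) \;\le\; 2\eta\, \mathbb{I}_{A} \qquad \text{almost surely.}
\]

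Taking expectations and invoking Lemma~\ref{lem_WlnotbarW} then gives $\mathbb{E}\big[\max_{i}\limsup_{t}\varphi_i(T_0,t)\big]\le 2\eta\,\mathbb{P}(A)\le 2\eta\,g_{\mathcal{L}}(T_0)$. Applying Markov's inequality to the nonnegative random variable $X:=\max_{i}\limsup_{t}\varphi_i(T_0,t)$ at level $a:=(2\eta/\delta)\,g_{\mathcal{L}}(T_0)$ produces
\[
\mathbb{P}\Big( X > \tfrac{2\eta}{\delta}\, g_{\mathcal{L}}(T_0) \Big) \;\le\; \frac{\mathbb{E}[X]}{(2\eta/\delta)\, g_{\mathcal{L}}(T_0)} \;\le\; \frac{2\eta\, g_{\mathcal{L}}(T_0)}{(2\eta/\delta)\, g_{\mathcal{L}}(T_0)} \;=\; \delta,
\]
which is the asserted bound (the strictness being absorbed into the slack $\mathbb{P}(A)\le g_{\mathcal{L}}(T_0)$ of Lemma~\ref{lem_WlnotbarW}).

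The step I expect to be the main obstacle is justifying the pathwise domination by $2\eta\,\mathbb{I}_{A}$: the quantity $\limsup_{t}\varphi_i$ depends on the entire infinite product of realized weight matrices, yet must be controlled by the single event $A$. The subtlety is that even though misclassifications cease after the finite time $T_f$ (Lemma~\ref{lem_als_w}), the erroneous matrices accumulated before $T_f$ permanently perturb the limit, so on $A$ the limiting deviation is generically nonzero and cannot be made to decay; the argument must therefore lean on the uniform $2\eta$ cap (the ``worst-case'' approach) rather than on any vanishing of the deviation. A secondary technical care point is the substochasticity (rather than stochasticity) of $\prod\wl(k)$, which still leaves the $\eta$-bound intact.
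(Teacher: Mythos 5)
Your proof is correct, and it reaches the stated bound by a cleaner route than the paper. The paper introduces a random variable $T_f(T_0,t)$ recording the last time before $t$ at which $\wl(k)\neq\barwl$, forms the matrix difference $\Delta(\wl,T_f)$ of the two partial products up to that time, and invokes Lemma~4 of the cited reference to get the pathwise bound $\varphi_i(T_0,t)\le 2\eta\bigl[1-(1/n_w)^{T_f(T_0,t)}\bigr]$; it then applies the Monotone Convergence Theorem and lower-bounds $\mathbb{E}\bigl[(1/n_w)^{T_f}\bigr]$ by $\mathbb{P}(T_f=0)$ before finishing with Markov's inequality. You observe that this last lower bound throws away all of the refinement anyway: the paper's expectation bound collapses to $2\eta\,\mathbb{P}(T_f>0)=2\eta\,\mathbb{P}(A)$, which is exactly what your pathwise domination $\max_i\limsup_t\varphi_i\le 2\eta\,\mathbb{I}_A$ delivers directly, using only the crude facts that $\varphi_i\le 2\eta$ always (row-(sub)stochasticity plus $|x_j|\le\eta$) and $\varphi_i\equiv 0$ on $A^c$. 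Your version thus dispenses with $T_f(T_0,t)$, the external Lemma~4, and the monotone-convergence step, at no cost in the final constant. Two minor remarks: your handling of the strict inequality ($<\delta$ versus $\le\delta$) is no worse than the paper's, which has the same slack; and you correctly flag that $\prod_k\wl(k)$ is only substochastic, which still gives the $\eta$ cap on each bracketed term.
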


\begin{proof}
    We rewrite the deviation $\varphi_i(T_0,t)$ as per the definition in \eqref{eq_con_part1}, as follows:
    \begin{equation*}
        \varphi_i(T_0,t)= \Bigg| \Bigg [  \bigg (\bigg( \prod_{k=T_0-1}^{t-1} \wl(k) \bigg)-\barwl^{t-T_0} \bigg)\xl(0) \Bigg]_i \Bigg|.
    \end{equation*}
We define the random variable $T_f(T_0,t)$, as follows:
\begin{equation}\label{eq_Tf_T0}
   T_f(T_0,t)=  \begin{cases*}
0 \: \parbox[t]{8.5cm}{\text{if  $\wl(k)= \barwl$ 
 for $T_0-1\le k\le t-1$}},\\
\underset{\substack{ s.t. \:  \wl(k+T_0-1) = \barwl\\ k \in \{0,\cdots, t- T_0 \}}} {\sup{k+1}}  \text{ otherwise}.
\end{cases*}
\end{equation}
We further specify the difference between the products of true and assigned matrices with respect to the random variable $T_f(T_0,t)$ and obtain
\begin{equation*}
    \Delta(\wl,T_f) =  \bigg ( \prod_{k=T_0-1}^{T_0+T_f(T_0,t)-2} \wl(k) \bigg) -\bigg ( \prod_{k=T_0-1}^{T_0+T_f(T_0,t)-2} \barwl \bigg).
\end{equation*}
Using the difference $\Delta(\wl,T_f)$  we bound the deviation as follows:
\begin{align*}
 \varphi_i(T_0,t)&= \Bigg| \Bigg [  \bigg ( \prod_{k=T_0+T_f(T_0,t)-1}^{t-1} \barwl \bigg) \Delta(\wl,T_f) \xl(0) \Bigg]_i  \Bigg|   \\
 &\le \max_{i \in \mathcal{L}} |[\Delta(\wl,T_f) \xl(0)]_i |\\
 &\le \eta \max_{i \in \mathcal{L}} [|\Delta(\wl,T_f) | \mathbf{1} ]_i,
\end{align*}
where we state the upper bound using the fact that $\barwl$ is a row-stochastic matrix, and $|\xl(0)|_i \le \eta$ for any agent $i \in \mathcal{L}$. For a row-stochastic matrix $\barwl$
and a row-substochastic matrix $\wl(k)$ at any time $k$, it holds $[\barwl]_{ii} \ge 1/ n_w $ and $[\wl(k)]_{ii} \ge 1/ n_w$ by their definitions in Eqs.~\eqref{eq_wij}-\eqref{eq_barwij}, where $n_w= \max \{ |\mathcal{L}| +|\mathcal{M}|, \kappa\}$. Hence, by Lemma 4 of \cite{yemini2021characterizing} we have the following relation
\begin{align*}
    [|\Delta(\wl,T_f) | \mathbf{1} ]_i \le 2 \bigg [ 1- \bigg( \frac{1}{n_w}\bigg )^{T_f(T_0,t)}\bigg],
\end{align*}
 which implies that  $\varphi_i(T_0,t) \le 2 \eta [ 1- ( 1/n_w)^{T_f(T_0,t)} ] $  for any $t \ge T_0$ and $i \in \mathcal{L}$. Hence,
 \begin{equation} \label{eq_Markov_varphi}
     \max_{i \in \mathcal{L}} \: \limsup_{t \to \infty} \varphi_i(T_0,t) \le \lim_{t \to\infty} 2 \eta \bigg [ 1- \bigg( \frac{1}{n_w}\bigg )^{T_f(T_0,t)}\bigg].
 \end{equation}
By defining $\widebar{\varphi} (T_0,t) := 2 \eta \bigg [ 1- \bigg( \frac{1}{n_w}\bigg )^{T_f(T_0,t)}\bigg]$ and with the  result $T_f(T_0,t)=t$ almost surely for $t \ge T_f$ by Lemma~\ref{lem_als_w}, Markov's inequality gives the following bound,
\begin{equation}\label{eq-barphi}
   \mathbb{P} ( \lim_{t \rightarrow \infty} \widebar{\varphi} (T_0,t) \le \frac{2\eta}
    {\delta} g_{\mathcal{L}} (T_0,t)  ) \le \frac{\mathbb{E}  (\lim_{t \rightarrow \infty} \widebar{\varphi} (T_0,t))} {\frac{2\eta}
    {\delta} g_{\mathcal{L}} (T_0)}.
\end{equation}
The sequence of the random variables $\{\widebar{\varphi} (T_0,t)\}_{t \ge T_0}$  is nonnegative and nondecreasing, as we have $0 \le \widebar{\varphi} (T_0,t) \le \widebar{\varphi} (T_0,t+1) \le 2 \eta $ and $0 \le T_f(T_0,t) \le  T_f(T_0,t+1)$ by the definition in Eq \eqref{eq_Tf_T0}. the Monotone Convergence Theorem allows us to exchange the order of the limit and the expectation operator. Therefore, 
\begin{align*}
    \mathbb{E}  (\lim_{t \rightarrow \infty} \widebar{\varphi} (T_0,t)) &= \lim_{t \rightarrow \infty} \mathbb{E}  (\widebar{\varphi} (T_0,t))\\
    &= 2\eta  \Bigg [ 1- \lim_{t \rightarrow \infty} \mathbb{E} \Bigg (\bigg( \frac{1}{n_w}\bigg )^{T_f(T_0,t)} \Bigg) \Bigg ] \\
    &\le 2\eta [1- \lim_{t \rightarrow \infty}\mathbb{P} (T_f(T_0,t)=0)] \\
    &= 2 \eta \lim_{t \rightarrow \infty} [1-\mathbb{P} (T_f(T_0,t)=0) ].
\end{align*}
Since $1-\mathbb{P} (T_f(T_0,t)=0) = \mathbb{P} (T_f(T_0,t)>0) $, we have that
\begin{align*}
    \mathbb{E}  (\lim_{t \rightarrow \infty} \widebar{\varphi} (T_0,t)) &\le  2 \eta \lim_{t \rightarrow \infty} \mathbb{P} (T_f(T_0,t)>0) \\
    &\le 2 \eta \: \mathbb{P} ( \exists k \ge T_0-1: \wl(k) \neq \widebar{W}_{\ccalL} ) \\
    &\le 2\eta g_{\mathcal{L}}(T_0).
\end{align*}
where the last inequality follows by Lemma~\ref{lem_WlnotbarW} and the definition of $g_{\mathcal{L}}(T_0)$ in Eq \eqref{eq_gl}. 
The result follows from the preceding relation and Eqs.\eqref{eq_Markov_varphi}--\eqref{eq-barphi}.
 \end{proof}
The result of Lemma \ref{lem_dev_p1} is a direct consequence of the probability of the event we defined in Lemma \ref{lem_WlnotbarW} since the differences between the products of the nominal and the actual matrices are bounded with the given probability. Next, we analyze the remaining part of the deviation resulting from the misclassification of malicious agents, by defining the following quantity for each $i\in\ccalL$,
\begin{equation}\label{eq-mal-infl}
\phi_i(T_0,t)=\eta \sum_{k=T_0-1}^{t-1} \sum_{ j \in \Nim \cap \Ni (t)} \Bigg [ \Bigg ( \prod_{\ell=k+1}^{t-1}   W_{\mathcal{L}} (\ell) \Bigg ) W_{\mathcal{M}} (k) \Bigg]_{ij}.
\end{equation}
The value $\phi_i(T_0,t)$ provides  an upper bound on the elements of the vector $\phi_{\mathcal{M}}(T_0,t)$ of influence of the malicious agents as defined in~\eqref{eq_sep_Dy}, for which we have
\begin{equation*}
|[\phi_{\mathcal{M}}(T_0,t)]_i| \le \max_{i \in \mathcal{L}} \phi_i(T_0,t).
\end{equation*}

\begin{lemma} \label{lem_dev_p2}
 Suppose Assumption \ref{as_trust} holds. Let $\xi>0$ and $\gamma \in (0.5,1)$ be the parameters of Algorithm~\ref{alg_trust}. For an error level $\delta > 0$ and $T_0 > \left(\frac{\xi}{\lambda}\right)^{1/(1-\gamma)}$, 
\begin{equation*}
\mathbb{P} \Bigg ( \max_{i \in \mathcal{L}} \limsup_{t \to\infty } \phi_i(T_0,t) > \frac{\eta }{ \kappa \delta} g_{\mathcal{M}}(T_0) \Bigg) < \delta
\end{equation*} 
where $\eta \ge \sup_{i \in \mathcal{N}, t \in \mathbb{N} } |x_i(t)|$, $\lambda = \min_{m \in \mathcal{M}} (d-c_m)$, and
\begin{equation} \label{eq_gm}
    g_{\mathcal{M}} (T_0)= \frac{|\mathcal{L}| |\mathcal{M}|\exp((-\xi^2/2) \lambda^2 T_0)}{1-\exp((-\xi^2/2) \lambda^2)}.
\end{equation} 
\end{lemma}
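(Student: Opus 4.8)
The plan is to reuse the two–stage template of \cref{lem_dev_p1}: first establish the expectation estimate
\[
\mathbb{E}\Big[\max_{i\in\mathcal{L}}\limsup_{t\to\infty}\phi_i(T_0,t)\Big]\le\frac{\eta}{\kappa}\,g_{\mathcal{M}}(T_0),
\]
and then apply Markov's inequality to the nonnegative random variable $\max_{i\in\mathcal{L}}\limsup_t\phi_i(T_0,t)$, which gives $\mathbb{P}\big(\max_i\limsup_t\phi_i(T_0,t)>\tfrac{\eta}{\kappa\delta}g_{\mathcal{M}}(T_0)\big)\le\delta$, matching the claim.

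For the expectation estimate I would first derive a deterministic pointwise bound on $\phi_i(T_0,t)$ from \eqref{eq-mal-infl} that isolates the malicious-misclassification events. Two structural facts drive this. From \eqref{eq_wij}, each malicious weight is small: $[W_{\mathcal{M}}(k)]_{i'j}=\tfrac{1}{n_{w_{i'}}(k)}\,\ind{j\in\mathcal{N}_{i'}(k)}\le\tfrac1\kappa\,\ind{j\in\mathcal{N}_{i'}(k)}$, so $\sum_{j\in\mathcal{M}}[W_{\mathcal{M}}(k)]_{i'j}\le\tfrac1\kappa\,|\mathcal{N}_{i'}(k)\cap\mathcal{M}|$. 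Second, writing $P(k,t)=\prod_{\ell=k+1}^{t-1}W_{\mathcal{L}}(\ell)$, each $W_{\mathcal{L}}(\ell)$ is row-substochastic (its row sums equal one minus the weight mass sent to malicious agents), hence so is $P(k,t)$, giving $\sum_{i'\in\mathcal{L}}[P(k,t)]_{ii'}\le1$. Since all entries are nonnegative, enlarging the inner index set from $\mathcal{N}_i^{\mathcal{M}}\cap\mathcal{N}_i(t)$ to all of $\mathcal{M}$ only increases the sum, and expanding $\sum_{j\in\mathcal{M}}[P(k,t)W_{\mathcal{M}}(k)]_{ij}=\sum_{i'}[P(k,t)]_{ii'}\sum_{j}[W_{\mathcal{M}}(k)]_{i'j}\le\max_{i'}\tfrac1\kappa|\mathcal{N}_{i'}(k)\cap\mathcal{M}|$ yields the $i$-free bound $\phi_i(T_0,t)\le\frac{\eta}{\kappa}\sum_{k=T_0-1}^{t-1}\max_{i'\in\mathcal{L}}|\mathcal{N}_{i'}(k)\cap\mathcal{M}|$.

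Because this bound is independent of $i$ and its right-hand side is nondecreasing in $t$, it follows that $\max_i\limsup_t\phi_i(T_0,t)\le\frac{\eta}{\kappa}\sum_{k=T_0-1}^{\infty}\max_{i'}|\mathcal{N}_{i'}(k)\cap\mathcal{M}|$. By Tonelli (nonnegative summands) I would exchange expectation and sum, then bound $\mathbb{E}\big[\max_{i'}|\mathcal{N}_{i'}(k)\cap\mathcal{M}|\big]\le\sum_{i'\in\mathcal{L}}\sum_{m\in\mathcal{N}_{i'}^{\mathcal{M}}}\mathbb{P}(m\in\mathcal{N}_{i'}(k))$, noting there are at most $|\mathcal{L}||\mathcal{M}|$ such pairs. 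The hypothesis $T_0>(\xi/\lambda)^{1/(1-\gamma)}$, together with $\lambda\le d-c_m$, ensures every $k\ge T_0-1$ lies in the regime where \cref{lem_misp_mal} applies, so $\mathbb{P}(m\in\mathcal{N}_{i'}(k))\le\exp(-(\xi^2/2)\lambda^2(k+1))$ as already used in \cref{lem_WlnotbarW}. The geometric series $\sum_{k=T_0-1}^{\infty}\exp(-(\xi^2/2)\lambda^2(k+1))$ sums to $\exp(-(\xi^2/2)\lambda^2T_0)/\big(1-\exp(-(\xi^2/2)\lambda^2)\big)$, which together with the factor $|\mathcal{L}||\mathcal{M}|$ reproduces $g_{\mathcal{M}}(T_0)$ from \eqref{eq_gm} and closes the estimate.

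The main obstacle I anticipate is the statistical dependence rather than any single inequality: the trust aggregates driving the classifications at different times $k$ share observation histories, so the products $P(k,t)$ and the factors $W_{\mathcal{M}}(k)$ are correlated and their expectations cannot be factored. The resolution is precisely the substochasticity step above, which converts the random transition product into a deterministic $\le1$ factor \emph{before} any expectation is taken, reducing everything to the per-pair misclassification probabilities of \cref{lem_misp_mal}; the remaining technical care is verifying that the monotone structure of the partial sums legitimizes the limit/expectation interchange, exactly as the monotone convergence argument does in \cref{lem_dev_p1}.
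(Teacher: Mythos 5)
Your proposal is correct and follows essentially the same route as the paper's proof: both de-randomize the transition products via row-substochasticity and the bound $[W_{\mathcal{M}}(k)]_{i'j}\le 1/\kappa$, reduce everything to the misclassification indicators $\mathbbm{1}_{\{m\in\mathcal{N}_{i'}(k)\}}$ whose probabilities decay geometrically by \cref{lem_misp_mal}, justify the limit/expectation interchange by monotonicity, and finish with Markov's inequality. The only (cosmetic) difference is where the $|\mathcal{L}|$ factor of $g_{\mathcal{M}}$ enters: the paper takes a union bound over $i\in\mathcal{L}$ at the probability level and bounds each $\mathbb{E}(\limsup_t\phi_i)$ by an $|\mathcal{M}|$-term sum, whereas you bound $\mathbb{E}(\max_i\limsup_t\phi_i)$ directly by summing over all $|\mathcal{L}||\mathcal{M}|$ agent pairs and apply Markov once — the two bookkeeping choices yield the identical final bound.
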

\begin{proof}
Since
\begin{align*}
&\mathbb{P} \Bigg ( \max_{i \in \mathcal{L}} \limsup_{t \to \infty } \phi_i(T_0,t) >\frac{\eta}{ \kappa \delta} g_{\mathcal{M}}(T_0) \Bigg )\\
&= \mathbb{P} \Bigg ( \bigcup_{i \in \mathcal{L}} \limsup_{t \to\infty } \phi_i(T_0,t) > \frac{\eta}{ \kappa \delta} g_{\mathcal{M}}(T_0) \Bigg ),
\end{align*}
using the union bound and Markov's inequality, we obtain
\begin{align*}
&\mathbb{P} ( \max_{i \in \mathcal{L}} \limsup_{t \to \infty } \phi_i(T_0,t) > \frac{\eta}{ \kappa \delta} g_{\mathcal{M}}(T_0))\\
&\le\sum_{ i\in \mathcal{L}} \mathbb{P} ( \limsup_{t \to\infty } \phi_i(T_0,t) > \frac{\eta }{ \kappa \delta} g_{\mathcal{M}}(T_0)) \\
& \le\frac{ \delta\kappa |\mathcal{L}|\,\mathbb{E}(\limsup_{t \to \infty } \phi_i(T_0,t)) }{\eta g_{\mathcal{M}}(T_0)}.
\end{align*}
In the remaining part, we focus on deriving an upper bound for the expectation $\mathbb{E}(\limsup_{t \to\infty } \phi_i(T_0,t))$ by firstly bounding the random variable $\phi_i(T_0,t)$ defined in~\eqref{eq-mal-infl}), as follows:
\begin{align*}
\phi_i(T_0,t)
&\le \eta \sum_{k=T_0-1}^{t-1} \sum_{ j \in \Nim \cap \Ni (t)} \frac{1}{\kappa} \Bigg(\sum_{n\in \ccalL} \tilde{w}_{in} \Bigg)
\end{align*}
where $\tilde{w}_{in}=[\prod_{l=k+1}^{t-1}   W_{\mathcal{L}} (\ell)]_{in}$ for $(i,n) \in \ccalL \times \ccalL$, and we used the fact that $[W_{\mathcal{M}} (k) ]_{ij} \le 1/\kappa$ for any $(i,j) \in \ccalL \times \ccalM$. As the product of row-(sub)stochactic matrices is still row-(sub)stochactic, it holds $\sum_{n\in \ccalL} \tilde{w}_{in} \le 1$. 
Using the indicator variable $\mathbbm{1}_{\{j \in \Ni (t)\}}$, which equals $1$ if a malicious agent $j$ is a part of a trusted neighborhood and equal to 0 otherwise, we have the following upper-bound on $\bar{\phi}_i (T_0,t)$:
\begin{align*}
\phi_i(T_0,t)&
&\le  \frac{\eta}{\kappa} \sum_{k=T_0-1}^{t-1} \sum_{ j \in \Nim} \mathbbm{1}_{\{j \in \Ni (t)\}} 
=\bar{\phi}_i (T_0,t).
\end{align*}
The upper bound also holds for the limit superior of both sequences in expectation, i.e.,
\begin{equation*}
\mathbb{E} (\limsup_{t \to\infty } \phi_i(T_0,t)) \le \mathbb{E} (\limsup_{t \to\infty } \bar{\phi}_i(T_0,t)).
\end{equation*}
Since the sequence $\{\bar{\phi}_i (T_0,t)\}_{t\ge T_0}$ is nonnegative and nondecreasing as $t$ increases, we apply the Monotone Convergence Theorem, and obtain
\begin{align*}
\mathbb{E} (\limsup_{t \to\infty } \bar{\phi}_i(T_0,t)) &= \mathbb{E} (\lim_{t \to \infty } \bar{\phi}_i(T_0,t)) \\
&= \lim_{t \to\infty }\mathbb{E} ( \bar{\phi}_i(T_0,t)).
\end{align*}
Next, we analyze the expectation using the linearity of expectation and the fact that the expectation of an indicator variable is equal to the probability of the given event. Thus, we have
\begin{align*}
\lim_{t \to\infty } \mathbb{E}(  \bar{\phi}_i(T_0,t))
&=\frac{\eta}{\kappa}
\lim_{t \to \infty } \mathbb{E}\left(
\sum_{k=T_0-1}^{t-1} \sum_{ j \in \Nim} \mathbbm{1}_{\{j \in \Ni (t)\}}\right)\\
&=\frac{\eta}{\kappa}
\lim_{t \to \infty } \sum_{k=T_0-1}^{t-1} \sum_{ j \in \Nim} \mathbb{P}(j \in \Ni (t)).
\end{align*} 
Using the upper-bound for the probability of misclassification of malicious agents of Lemma~\ref{lem_misp_mal}, we obtain
\begin{align*}
\mathbb{E}  (\limsup_{t \to\infty } &\phi_i(T_0,t)) \le 
\frac{\eta}{\kappa} 
\lim_{t \to \infty } \sum_{k=T_0-1}^{t-1} \sum_{ j \in \Nim} \mathbb{P}(j \in \Ni (t))  \\
&\le \frac{\eta |\mathcal{M}|}{\kappa} 
\lim_{t \to \infty } \sum_{k=T_0-1}^{t-1} \exp(-(\xi^2/2)\lambda^2 (k+1)) \\
&\le \frac{\eta |\mathcal{M}|}{\kappa}  \Bigg( \frac{\exp((-\xi^2/2) \lambda^2 T_0)}{1-\exp((-\xi^2/2) \lambda^2)}\Bigg).
\end{align*} 
Thus, for a given error tolerance $\delta >0$, we obtain
\begin{align*}
&\mathbb{P} ( \max_{i \in \mathcal{L}} \limsup_{t \to \infty } \phi_i(T_0,t) >\frac{\eta }{ \kappa \delta} g_{\mathcal{M}}(T_0)) \\
& \le\frac{\delta\kappa|\mathcal{L}|\,\mathbb{E}(\limsup_{t \to \infty } \phi_i(T_0,t)) }{\eta g_{\mathcal{M}}(T_0)}\le \delta.
\end{align*}
\end{proof}

Similar to, the result of Lemma \ref{lem_dev_p1}, we derived the upper bound using misclassification probabilities in Lemma \ref{lem_dev_p2}. This time, the result specifically follows from misclassification probabilities of malicious agents. Finally, we state the main result on the deviation from the nominal consensus process.

\begin{theorem} \label{thm_dev}
Suppose Assumption \ref{as_trust} holds. Let $\xi>0$ and $\gamma \in (0.5,1)$ be as given in Algorithm \ref{alg_trust}. For an error level $\delta >0$ and $T_0 > \left(\frac{\xi}{\lambda}\right)^{1/(1-\gamma)}$,
we have
\begin{align*}
    \mathbb{P}( \max \: \limsup_{t \xrightarrow[]{} \infty} |[ \xl(T_0,t)- \mathbf{1} \nu^T \xl(0) ]_i | &< \Delta_{\max} (T_0,\delta) ) \cr
    \ge 1-\delta,
\end{align*}
where $\Delta_{\max} (T_0,\delta)= 2(\frac{2\eta} {\delta} g_{\mathcal{L}}  (T_0)+ \frac{\eta} {\kappa \delta} g_{\mathcal{M}}(T_0))$.
\end{theorem}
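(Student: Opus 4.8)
The plan is to assemble the total deviation from the two already-controlled pieces—the legitimate-misclassification deviation $\varphi_i(T_0,t)$ of Lemma~\ref{lem_dev_p1} and the malicious-influence term $\phi_i(T_0,t)$ of Lemma~\ref{lem_dev_p2}—and to split the failure probability $\delta$ evenly between them. First I would start from the decomposition $\xl(T_0,t)=\Tilde{x}_{\ccalL}(T_0,t)+\phi_{\mathcal{M}}(T_0,t)$ in~\eqref{eq_con_sum} and insert the nominal product $\barwl^{\,t-T_0+1}\xl(0)=\big(\prod_{k=T_0-1}^{t-1}\barwl\big)\xl(0)$, writing
\[
[\xl(T_0,t)-\mathbf{1}\nu^T\xl(0)]_i
= \big[\Tilde{x}_{\ccalL}(T_0,t)-\barwl^{\,t-T_0+1}\xl(0)\big]_i
+ \big[\barwl^{\,t-T_0+1}\xl(0)-\mathbf{1}\nu^T\xl(0)\big]_i
+ [\phi_{\mathcal{M}}(T_0,t)]_i.
\]
By~\eqref{eq_dev} the modulus of the first bracket is exactly $\varphi_i(T_0,t)$, and by the bound stated just before Lemma~\ref{lem_dev_p2} the modulus of the last bracket is at most $\max_{i\in\mathcal{L}}\phi_i(T_0,t)$.

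The middle bracket is purely deterministic and disappears in the limit: since $\barwl$ is a primitive row-stochastic matrix (Assumption~\ref{as_trust}-\textit{(iii)} together with~\eqref{eq_barwij}), the Perron--Frobenius theorem gives $\barwl^{\,t-T_0+1}\to\mathbf{1}\nu^T$ as $t\to\infty$, so this term has zero $\limsup$. Applying the triangle inequality, taking $\limsup_{t\to\infty}$, and then $\max_{i\in\mathcal{L}}$—and using that over the finite index set $\mathcal{L}$ one has $\limsup_t\max_i a_i(t)=\max_i\limsup_t a_i(t)$—I obtain the pathwise inequality
\[
\max_{i\in\mathcal{L}}\limsup_{t\to\infty}\big|[\xl(T_0,t)-\mathbf{1}\nu^T\xl(0)]_i\big|
\le \max_{i\in\mathcal{L}}\limsup_{t\to\infty}\varphi_i(T_0,t)
+ \max_{i\in\mathcal{L}}\limsup_{t\to\infty}\phi_i(T_0,t).
\]

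Next I would invoke Lemma~\ref{lem_dev_p1} with error level $\delta/2$, bounding the first term by $\tfrac{2\eta}{\delta/2}g_{\mathcal{L}}(T_0)=\tfrac{4\eta}{\delta}g_{\mathcal{L}}(T_0)$ off an event of probability $<\delta/2$, and Lemma~\ref{lem_dev_p2} with error level $\delta/2$, bounding the second term by $\tfrac{\eta}{\kappa(\delta/2)}g_{\mathcal{M}}(T_0)=\tfrac{2\eta}{\kappa\delta}g_{\mathcal{M}}(T_0)$ off an event of probability $<\delta/2$; the hypothesis $T_0>(\xi/\lambda)^{1/(1-\gamma)}$ required by both lemmas does not depend on the error level, so this substitution is legitimate. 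A union bound over the two exceptional events then shows that, with probability at least $1-\delta$, both bounds hold simultaneously, and on that event the right-hand side above is at most $\tfrac{4\eta}{\delta}g_{\mathcal{L}}(T_0)+\tfrac{2\eta}{\kappa\delta}g_{\mathcal{M}}(T_0)=\Delta_{\max}(T_0,\delta)$, which is the claim.

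The heavy lifting has already been done in Lemmas~\ref{lem_dev_p1}--\ref{lem_dev_p2}, so I expect the main obstacle here to be purely organizational: matching the leading factor of $2$ in $\Delta_{\max}(T_0,\delta)$ by halving the tolerance in each lemma, and carefully justifying the interchange of $\max_{i\in\mathcal{L}}$, $\limsup_{t\to\infty}$, and the triangle inequality (in particular that the deterministic middle term genuinely drops out). The only mild subtlety is strict versus non-strict inequality—the lemmas deliver $\le$ bounds on a $1-\delta$ event whereas the theorem asserts a strict $<$—but this is absorbed harmlessly since the stated lemma probabilities are themselves strict.
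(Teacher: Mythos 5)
Your proposal is correct and follows essentially the same route as the paper's own proof: the identical three-term decomposition via the nominal product $\prod\barwl$ and $\xl=\Tilde{x}_{\ccalL}+\phi_{\ccalM}$, the Perron--Frobenius argument killing the deterministic middle term, and the application of Lemmas~\ref{lem_dev_p1}--\ref{lem_dev_p2} at tolerance $\delta/2$ each followed by a union bound. The paper phrases the last step through explicit events $E_1,E_2,E_3$ and the inclusion $E_2^C\cap E_3^C\subseteq E_1^C$, but this is the same accounting you perform; your added care about interchanging $\max_{i\in\ccalL}$ with $\limsup_{t\to\infty}$ is a harmless refinement.
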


\begin{proof}
By the triangle inequality we have
    \begin{align*}
       &|[ \xl(T_0,t)- \mathbf{1} \nu^T \xl(0) ]_i| \nonumber \\
       &\le \Bigg | \Bigg [\xl(T_0,t) - \Bigg ( \prod_{k=T_0}^{t-1} \barwl \Bigg ) \xl(0) \Bigg]_i \Bigg| \nonumber\\
       &\hspace{0.3 cm}+ \Bigg | \Bigg [  \Bigg ( \prod_{k=T_0}^{t-1} \barwl \Bigg ) \xl(0)- \mathbf{1} \nu^T \xl(0) \Bigg]_i \Bigg|.
    \end{align*}
Given that  
$\lim_{t\to\infty}\barwl^t={\bf 1}\nu^T$ almost surely for a stochastic vector $\nu>{\bf 0}$, we have that
\begin{align*}
    \lim_{t \xrightarrow{} \infty} \Bigg | \Bigg [  \Bigg ( \prod_{k=T_0}^{t-1} \barwl \Bigg ) \xl(0)- \mathbf{1} \nu^T \xl(0) \Bigg]_i \Bigg| =0.
\end{align*}
This also implies the equivalence between the following events
\begin{align}
    &\{ \max_{i \in \ccalL} \: \limsup_{t \xrightarrow[]{} \infty} |[ \xl(T_0,t)- \mathbf{1} \nu^T \xl(0) ]_i | \ge \Delta_{\max} (T_0,t))  \nonumber \} \\
    =&  \Bigg \{ \max_{i \in \ccalL} \:
    \limsup_{t \xrightarrow{} \infty}
    \Bigg| \Bigg [\xl(T_0,t) - \Bigg ( \prod_{k=T_0}^{t-1} \barwl \Bigg ) \xl(0) \Bigg]_i \Bigg| \nonumber \\
    & \hspace{5.5 cm } \ge \Delta_{\max}(T_0,t) \Big\}.
\end{align}
Then, using the triangle inequality and the relation $\xl (T_0,t)= \Txl  (T_0,t)+ \phi_{\mathcal{M}} (T_0,t)$ defined in Eq. \eqref{eq_con_sum}, it follows that 

\begin{align}
    &\Bigg| \Bigg [\xl(T_0,t) - \Bigg ( \prod_{k=T_0}^{t-1} \barwl \Bigg ) \xl(0) \Bigg]_i \Bigg|  \nonumber \\
    &\le \Bigg| \Bigg [\Txl(T_0,t) - \Bigg ( \prod_{k=T_0}^{t-1} \barwl \Bigg ) \xl(0) \Bigg]_i \Bigg|  + | [\phi_{\mathcal{M}} (T_0,t)]_i|.
\end{align}
Next,  to analyze the probability of maximal deviation from the nominal consensus, we define the following events
\begin{align}
    &E_1=\Bigg \{ \max_{i \in \ccalL} \: \limsup_{t \xrightarrow[]{} \infty} \Bigg| \Bigg [\xl(T_0,t) - \Bigg ( \prod_{k=T_0}^{t-1} \barwl \Bigg ) \xl(0) \Bigg]_i \Bigg| \nonumber\\
    &\hspace{5 cm} \ge \Delta_{\max} (T_0,t)\Bigg \}  \\
    &E_2= \Bigg \{ \max_{i \in \ccalL} \: \limsup_{t \xrightarrow[]{} \infty} \Bigg| \Bigg [\Txl(T_0,t) - \Bigg ( \prod_{k=T_0}^{t-1} \barwl \Bigg ) \xl(0) \Bigg]_i \Bigg| \nonumber \\
    &\hspace{5 cm}\ge  \frac{4\eta} {\delta} g_{\mathcal{L}}(T_0) \Bigg \} \\
    &E_3=  \{ \max_{i \in \ccalL} \: \limsup_{t \xrightarrow[]{} \infty} | [\phi_{\mathcal{M}} (T_0,t)]_i| \ge  \frac{\eta} {\kappa \delta} 2g_{\mathcal{M}}(T_0))\}
\end{align}
As the complements of the events satisfy the relation $(E_2 \cup E_3)^C= E_2^C \cap E_3^C \subseteq E_1^C$, we bound the probability of the event that the maximum deviation is at least $\Delta_{\max}(T_0,t)$ by the union bound, and obtain
\begin{equation}
   \mathbb{P}(E_1) \le \mathbb{P}(E_2)+\mathbb{P}(E_3) .   
\end{equation}
 Further, we have $2\frac{2\eta} {\delta} g_{\mathcal{L}}(T_0) =\frac{2\eta} {\delta/2} g_{\mathcal{L}}(T_0)$ and $2\frac{\eta} {\kappa \delta} g_{\mathcal{M}}(T_0)=\frac{\eta} {\kappa \delta/2} g_{\mathcal{M}}(T_0)$.  Therefore, $P(E_2) < \delta/2$ and $P(E_3) < \delta/2$, implying that 
\begin{align*}
   &\mathbb{P}( \max_{i \in \ccalL} \: \limsup_{t \xrightarrow[]{} \infty} |[ \xl(T_0,t)- \mathbf{1} \nu^T \xl(0) ]_i | < \Delta_{\max} (T_0,\delta) )\\
   &=1-\mathbb{P}(E_1)\ge  1-(\delta/2+\delta/2)=1-\delta.
\end{align*}
\end{proof}
We examined the deviation from the nominal consensus process. The result follows from the bounds we derived for each part of the deviation (Lemmas \ref{lem_dev_p1}-\ref{lem_dev_p2}). The result shows that as agents wait for more to start the consensus process, \textit{i.e.} with increasing $T_0$, they have tighter bounds on the probabilities. 

\section{Numerical Studies}
In this section, we assess the performance of our proposed algorithm against different type of malicious attacks in numerical studies. We consider a system with $10$ legitimate and $15$ malicious agents. We choose the number of malicious agents higher than the number of legitimate agents to consider a challenging scenario. We construct the communication graph as follows: first, we generate a cycle graph among the legitimate agents and then add $10$ more random edges between them. The malicious agents form random connections to every other agent with probability $0.2$ while we also ensure that they are connected to at least one legitimate agent. We sample agents' initial values from the uniform distribution $\mathcal{U}[-4,4]$ once for both legitimate and malicious agents. Legitimate agents follow the consensus dynamics given in \cref{eq_con} with $\kappa=10.$ The legitimate neighbors' trust values are sampled from the uniform distribution $\mathcal{U}[0.3,1]$ resulting in the expected value $d=0.65$. To model the case where each malicious agent has a different expected value, we choose their expectations from the uniform distribution $\mathcal{U}[0,0.45]$.

\textbf{Attack Models:} We consider two types of attack in our experiments: 1) Consistent attacks where malicious agents always send $\eta$ (or $-\eta$) if the true consensus value is negative (positive). During each communication, a malicious agent $m$'s trust value is sampled from the uniform distribution $\mathcal{U}[2c_m-1,1]$ with probability $p_m$ and from $\mathcal{U}[0.3,1]$ (the same distribution as the legitimate neighbors) with probability $1-p_m$. This attack type corresponds to a worst case scenario that is captured by the analysis where attackers always inject wrong information to the system. 2) Intermittent failures where malicious nodes follow the same consensus update rule as the legitimate agents considering all their neighbors as trustworthy. However, a malicious node $m$ sends $\eta$ (or $-\eta$) to its neighbors with probability $p_m$ instead of its true variable $x_m(t)$ if the true optimal value is negative (positive). When a malicious agent sends $\eta$ to its neighbors, its trust value is sampled from the uniform distribution $\mathcal{U}[2c_m-1,1]$ and from $\mathcal{U}[0.3,1]$ (the same distribution with the legitimate neighbors) when it sends  $x_m(t)$.

For both attacks, we assume that all malicious agents have the same $p_m$, and consider two cases with $p_m=0.2$ and $p_m=0.8.$ We use $\xi = 0.15$ and $\gamma=0.7$ as the parameters of our learning algorithm \cref{alg_trust_neig}. We use $T_0=60$ as it satisfies the largest theoretical lower bound on $T_0$ given in \cref{thm_dev} for all cases. We track the maximum deviation from the nominal consensus value over time. The results are shown in \cref{fig:numerical_studies}. Note that these results are averaged over 100 trials for each setup, where the communication graph, the initial values and the expected values of agents are fixed across the trials. In all cases and trials, we observe that agents reach consensus, as predicted by \cref{cor_con}. Moreover, we can see that the probability of being observable, $p_m$, has the highest impact on the deviation, as it affects the misclassification probabilities (see \cref{lem_misp_mal} and \cref{lem_misp_legit}). As expected, consistent attacks have more impact on the system when the attack probability is low ($p_m=0.2$) since malicious agents are always inserting a constant value $\eta$ (or $-\eta$) to the system, and they stay undetected for a longer time. When the attack probability is high ($p_m=0.8$), malicious agents get detected quickly, and the errors mainly stem from misclassified legitimate agents. This happens because the threshold for adding agents to the trusted neighborhood is narrow due to the window indicated through the parameters $\xi$ and $\gamma$.
\begin{figure}[ht]
    \centering
    \includegraphics[width=0.45\textwidth]{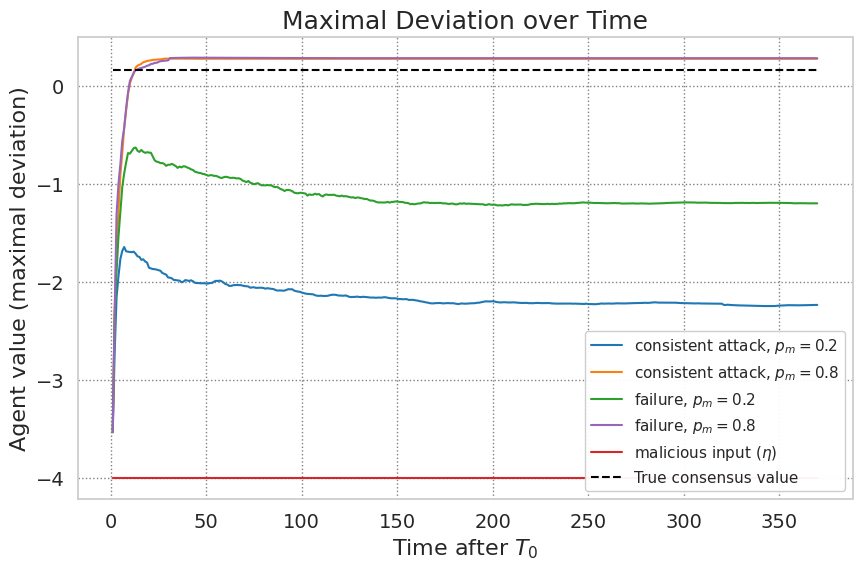}
      \caption{The maximal deviation of the legitimate agents from the nominal consensus value. Malicious input $\eta$ shows the maximum impact that malicious agents can have on the system.
      }
\label{fig:numerical_studies}
\vspace{-3 mm}
\end{figure}
\vspace{-3 mm}
\section{Conclusion}
In this paper, we studied the multi-agent resilient consensus problem where agents send their values over an undirected and static communication network. Assuming trust observations are available, we considered the scenarios with intermittent faulty or malicious transmissions, where the classification of agents based on constant thresholds may fail. We developed a novel detection algorithm to let legitimate agents determine their neighbors' types correctly. We showed that misclassification probabilities go to $0$ with geometric rates, implying that after some finite and random time, all agents are correctly classified. We also proved that agents reach a consensus almost surely asymptotically. For a pre-determined error tolerance, we derived the maximal deviation from the nominal consensus process, which is a function of the observation window and the number of legitimate and malicious agents, in addition to the parameters of the detection algorithm. Numerical experiments showed the convergence of the consensus process and the deviation under different scenarios. We consider this work as an important step toward developing resilient algorithms against smarter attackers going beyond intermittent attacks and strategically decide when to attack multi-agent systems in the future work

\section{Acknowledgements}
We thank Arif Kerem Dayi and Aron Vekassy for the comments that greatly contributed to the manuscript.

\bibliographystyle{IEEEtran}
\bibliography{bibliography}

\end{document}